\long\def\name{$\mathtt{PADD}$}
\title{The Limits of Optimal Pricing in the Dark\thanks{This work is supported by a Google Faculty Research Award and an NSF grant CCF-2132506.}}
\author{%
	Quinlan Dawkins\\
	Department of Computer Science\\
	University of Virginia\\
	\texttt{qed4wg@virginia.edu} \\
	\And
	Minbiao Han \\
	Department of Computer Science\\
	University of Virginia\\
	\texttt{mh2ye@virginia.edu} \\
	\AND
	Haifeng Xu \\
	Department of Computer Science\\
	University of Virginia\\
	\texttt{hx4ad@virginia.edu} \\
}
\newtheorem{lemma}{Lemma}
\newtheorem{definition}{Definition}
\newtheorem{remark}{Remark}
\newtheorem{example}{Example}
\newtheorem*{conjecture*}{Conjecture}
\newtheoremstyle{nonindented}{1ex}{1ex}{}{}{\bfseries}{.}{.5em}{}
\newtheoremstyle{indented}{1ex}{1ex}{\itshape\addtolength{\leftskip}{0.6cm}\addtolength{\rightskip}{0.6cm}}{}{\bfseries}{.}{.5em}{}
\theoremstyle{nonindented}
\theoremstyle{indented}
\theoremstyle{plain}
\newcommand{\bvec}[1]{\boldsymbol{ #1 }}
\renewcommand{\hat}{\widehat}
\renewcommand{\tilde}{\widetilde}
\renewcommand{\bar}{\overline}
\def\min{\qopname\relax n{min}}
\def\max{\qopname\relax n{max}}
\def\argmin{\qopname\relax n{argmin}}
\def\argmax{\qopname\relax n{argmax}}
\def\Pr{\qopname\relax n{\mathbf{Pr}}}
\def\Ex{\qopname\relax n{\mathbf{E}}}
\newcommand{\RR}{\mathbb{R}}
\def\C{\mathcal{C}}
\def\P{\mathcal{P}}
\newcommand{\eat}[1]{}
\newcommand{\maxi}[1]{\mbox{maximize} & {#1 } & \\}
\newcommand{\st}{\mbox{subject to} }
\newcommand{\con}[1]{&#1 & \\}
\newcommand{\qcon}[2]{&#1, & \mbox{for } #2.  \\}
\newenvironment{lp}{\begin{equation}  \begin{array}{lll}}{\end{array}\end{equation} }
\newenvironment{lp*}{\begin{equation*}  \begin{array}{lll}}{\end{array}\end{equation*}}
\begin{document}

\maketitle

\begin{abstract}

A ubiquitous learning problem in today's digital market is, during repeated interactions between a \emph{seller} and a \emph{buyer}, how a seller can gradually learn  optimal pricing decisions based on  the buyer's past purchase responses. A fundamental challenge of   learning in such a strategic setup is that the buyer will naturally have incentives to manipulate his responses in order to  induce more favorable learning outcomes for him.  To understand the limits of the seller's learning when facing such a strategic and possibly manipulative buyer, we study a natural yet powerful buyer manipulation strategy. That is,  before the pricing game starts, the buyer simply commits to ``imitate'' a different value function by pretending to   always react optimally   according to this \emph{imitative value function}. 

We fully characterize  the optimal imitative value function that the buyer should imitate as well as the resultant seller revenue and buyer surplus under this optimal buyer manipulation.  Our characterizations reveal many useful insights about what happens at equilibrium.  For example, 
a seller with concave production cost will obtain essentially $0$ revenue  at equilibrium whereas the revenue for a seller with convex production cost is the  \emph{Bregman divergence} of her cost function between no production and certain production. 
Finally, and importantly, we show that  a more powerful class of pricing schemes does not necessarily increase, in fact, may be harmful to,  the seller's revenue.  
Our results not only lead to an effective prescriptive way for buyers to  manipulate  learning algorithms but also shed  lights on the limits of what a seller can really achieve when pricing in the dark. 

\end{abstract}

\section{Introduction}
Pricing  is a basic question in microeconomics \cite{mas1995microeconomic} as well as a ubiquitous problem in today's digital markets \cite{den2015dynamic}.  In its textbook style setup, there are two agents: a  buyer (he)  and a seller (she). The seller produces $d$ types of \textit{divisible} goods for sale and has production cost $c(\bvec{x})$ for producing  \emph{bundle} $\bvec{x} \in \RR_+^d$ of the $d$ goods.  Using a standard linear pricing scheme with unit price  $p_i$ for goods $i\in[d]$, the seller charges the  buyer   $\bvec{x} \cdot \bvec{p}$ when he purchases bundle $\bvec{x} \in \RR_+^d$ under price vector $\bvec{p} \in \RR_+^d$.  Naturally,  the buyer's optimal bundle for purchase   depends on his \emph{value} function $v(\bvec{x})$ about the products. We assume that, given price vector $\bvec{p}$,   the rational buyer will always  pick the optimal bundle that maximizes his quasilinear utility $[v(\bvec{x}) - \bvec{x} \cdot \bvec{p}]$.  The key question of interest is how the seller can compute   the revenue-maximizing optimal prices assuming rational   buyer purchase behaviors.  


The above basic problem can be easily solved if the seller has access to the buyer's value function $v(\bvec{x})$. However, in practice,  this value function is usually private  and unknown to the seller. To address this challenge, there has been a rich line of recent research which looks to design the optimal pricing scheme against an unknown buyer. Some of these work  have adopted  learning-based approaches that aim to compute the profit maximizing price by interacting with the buyer and gleaning information about the buyer's utility function \cite{amin2015online,roth2016watch,roth2020multidimensional}.  These are motivated by the wide spread of e-commerce today where the seller can repeatedly interact with the same buyer or buyers from the same population with similar preferences. For example,  in \emph{online advertising},  ad exchange platforms learn to price advertisers from past behaviors; in \emph{online retailing},   retailers learn to price customers from their past purchase history;  and in \emph{crowdsourcing},  platforms learn to reward workers' efforts.  Another line of works look to design dynamic pricing mechanisms that dynamically adjusts the price based on the observed past buyer purchase behaviors with the objective of maximizing the aggregated total revenue   \cite{amin2013learning,amin2014repeated,mohri2014optimal,mohri2015revenue,vanunts2019optimal}. 

A key challenge of pricing against such unknown buyers, regardless through learning or dynamic pricing, is that the  buyer may manipulate his responses  in order to mislead the seller's algorithm and induce more favorable outcomes for themselves. This is a fundamental issue when learning from strategic data sources, and is particularly relevant in this pricing setup due to the  strategic nature of the problem.   In this case,  the buyer controls the information content to the seller,  therefore he naturally has incentives to utilize this advantage to gain more payoff by strategically misleading the seller's   algorithms.  Indeed, as observed in previous studies,  online advertisers may strategically respond to an ad exchange platform's prices in order to induce a lower future price \cite{korula2015optimizing}; Consumers strategically time their purchases in order to obtain lower prices at online retailing platforms  \cite{li2014consumers}.  

To understand the limits of the seller's optimal pricing against such a strategic and possibly manipulative buyer, we put forward a natural model which   augments the above basic pricing problem with only one additional step --- i.e.,  we assume that the game starts with the buyer  \emph{committing} to a different value functions, coined the \emph{imitative value function}, after which the seller will compute the optimal pricing scheme against this imitative buyer value function.  The buyer's commitment to an imitative value function captures a simple yet powerful buyer manipulation strategy that can be used against any seller learning algorithms ---  that is, the buyer simply  behaves consistently according to this  imitative value function during the entire interaction process. Intuitively, the motivation of such commitment assumption comes directly from the fact that the seller has no information about the buyer and  has to optimize pricing ``in the dark''. Consequently, if the buyer consistently behaves according to some different imitative value function, the seller is not able to distinguish this buyer from another buyer who truly has the value function (see more justifications of the commitment assumption in Section \ref{sec:model}). Moreover, such an imitation strategy is also easy to execute in practice by the buyer regardless whether the seller is learning from him or is adopting dynamic pricing schemes. In fact, the buyer could   even just  report his imitative value function to the buyer directly at the beginning of any interaction. In this situation, no learning or dynamic pricing will be needed as the buyer will indeed always behave according to the imitative value function. Therefore,  all the seller can do is  to directly apply the optimal pricing scheme for the buyer's imitative  value function.  

We remark that such imitation-based manipulation strategy has attracted much interest in   recent works, with similar motivations as us --- i.e., trying to understand how to manipulate learning algorithm or conversely, how to design strategy-aware learning algorithms to mitigate such manipulation.  However, most of these works have focused on the general Stackelberg game model \cite{gan2019imitative,birmpas2020optimally} as well as the Stackelberg security games \cite{gan2019manipulating,nguyen2019imitative,nguyen2020decoding}. Our optimal pricing problem is also a Stackelberg model and thus a natural fit for the study. The crucial difference between our work and previous studies is that both agents in our model have \emph{continuous} utility functions whereas all these previous works \cite{gan2019imitative,gan2019manipulating,nguyen2019imitative,birmpas2020optimally}  have discrete agent utility functions. Therefore, our model leads to a \emph{functional} analysis and optimization problem, which is more involved. Fortunately, we show that the optimal functional solution can still be characterized by leveraging the structure of the pricing problem.

\subsection{Our Results and Implications}\label{sec:results}
Given the effectiveness  and easy applicability  of the buyer's imitative strategy described above, this paper studies what the optimal buyer imitative value function is and how it would affect buyer's surplus and the seller's revenue.  Our main result provides a full characterization  about the optimal buyer imitative value function.  We show that the optimal buyer imitative value function $u^*$ features a specific  bundle of products $\bvec{x}^*$ that is most desirable to the buyer. Interestingly, it turns out that   $u^*$ is a Leontief-type  piece-wise linear concave value function \cite{allen1967macro}   such that the buyer would only  proportionally value a fraction of the desired bundle   $\bvec{x}^*$ and nothing else. 
Moreover, we also characterize the  seller revenue and buyer surplus at equilibrium as well as necessary and sufficient conditions under which the seller obtains strictly positive revenue.

The optimal buyer imitative value function turns out to depend crucially on the seller's production cost function. When the cost function is concave, we show that the optimal buyer imitative value function $u^*$  will ``squeeze'' the  seller revenue essentially to $0$. The fortunate news for the seller, however, is that finding out the $u^*$ turns out to be an NP-hard task for the buyer. In fact, we prove that it is NP-hard to find the $u^*$ that can guarantee a polynomial faction of the optimal buyer surplus. 
For the   widely  adopted \emph{convex}  production cost, we show that the equilibrium seller revenue is  the Bregman divergence  between production $\bvec{0}$ and the $\bvec{x}^*$ bundle mentioned above. This illustrates an interesting message that convex production  costs are ``better'' at  handling buyer's strategic manipulations. Note that production costs are indeed more often believed to be convex since  economic models typically assume that marginal costs increase as quantity goes up \cite{shephard1974law,samuelson1998microeconomics}. 

All our characterizations so far focus on the standard \emph{linear} pricing scheme. Our last result examines the possibility of using a more general class of pricing  schemes  to address the buyer's strategic manipulation. Surprisingly, we show that for the strictly more general class of concave pricing functions (i.e., the seller is allowed to use any concave function as a pricing function), the equilibrium will remain the same as described above. Therefore, the more general pricing schemes do not necessarily help to address buyer's manipulation behavior.  In fact, we show that there exist  examples where strictly \emph{broader} class of pricing functions leads to strictly worse seller revenue. This is because more general pricing schemes may ``overfit'' buyer's incentives, which renders it easier to manipulate. This illustrates an interesting phenomenon in learning from strategic data sources and shares similar spirit to  \emph{overfitting}  in standard machine learning tasks.  

\vspace{-0.03em}

\subsection{Additional Related Works}\label{sec:related}
Due to space limit, here we only briefly discuss the most related works while refer readers to Appendix \ref{append:sec:related} for more detailed discussions and comparisons. Closely related to ours is a recent study by  Tang and Zeng \cite{tang2018price}. They study  the bidders' problem of committing to ``imitate'' a fake value \emph{distribution} in auctions and acting consistently as if the  bidder's  value were from the fake  distribution. This is similar in spirit to our buyer's commitment to an imitative value \emph{function}. However, there is significant difference betweens  our setting and  that of \cite{tang2018price}, which leads to very different conclusions as well. Specifically, the seller in \cite{tang2018price} auctions a single indivisible item with no production costs whereas our seller sells multiple divisible items with production costs.   Another recent work \cite{nedelec2020robust} also  studies buyer's strategic manipulation  against seller's pricing algorithms, but in a single-item multi-buyer situation. Moreover, they assume a fixed  seller  learning strategy (thus not adaptive to buyer's strategy) with separated  exploration-exploitation phases motivated by \cite{amin2013learning}. 
Another  very relevant literature is  learning the optimal prices or optimizing aggregated total revenue by repeatedly  interacting with a single buyer   \cite{amin2013learning,amin2014repeated,mohri2014optimal,mohri2015revenue,vanunts2019optimal}.  These   works all focus on designing learning algorithms that can learn from strategically  buyer responses. Our work  complements  this literature by studying the limits of what learning algorithms can achieve.  Moreover, the setups of these previous works are also different from us -- they either assume buyer values are drawn from distributions  \cite{amin2013learning,amin2014repeated,mohri2015revenue}  or the seller sells a single indivisible good  with discrete agent utilities. Thus, their results are not comparable to us.

There have also been studies on learning the optimal prices from \emph{truthful} revealed preferences, i.e., assuming the buyer will honestly best respond to seller prices \cite{beigman2006learning,zadimoghaddam2012efficiently,balcan2014learning,pmlr-v119-zhang20f,amin2015online,roth2016watch,roth2020multidimensional}. Our works try to understand if the assumption of truthful revealed preferences does not hold and if the buyer will strategically respond to the seller's learning, what learning outcome could be expected when the buyer simply imitates a different value function that is  optimally chosen. From this perspective, these works serve as a key motivation for the present paper.   More generally, our work subscribes to the general line of research on learning from strategic data sources. 
Most works in this space has focused on classification  \cite{bruckner2011stackelberg,hardt2016stratclass,zhangincentive,dong:ec18,milli:fat19,hu:fat19,chen2020learning},  regression problems \cite{Perote2004StrategyproofEF,dekel2010incentive,chen:ec18} and distinguishing distributions \cite{zhang2019distinguishing,zhang2019samples}. Our work however focuses on learning the optimal pricing scheme.

\section{Preliminaries}


\paragraph{Basic Setup of the Optimal Pricing Problem.} A seller (\emph{she}) would like to  sell $d$ different types of \emph{divisible} goods to a buyer (\emph{he}).  It costs her $c(\bvec{x})$ to produce $\bvec{x} \in \RR^d$ bundle of these goods. Let $X \subset \RR^d$ denote the set of all  feasible bundles  that the seller can produce.  We assume $X$ is convex, closed and has positive measure. As a standard assumption  \cite{mas1995microeconomic,roth2016watch,beigman2006learning,balcan2014learning,zadimoghaddam2012efficiently}, the buyer has a \emph{concave} value function $v(\bvec{x})$  for any goods bundle $\bvec{x} \in X$. We do not make any assumption about the seller's production cost   $c(\bvec{x})$, except that it  is monotone non-decreasing.  For normalization, we assume $\bvec{0} \in X$ and $v(\bvec{0} ) = c(\bvec{0} ) =0$.  

The seller aims to find a revenue-maximizing  pricing scheme  assuming rational buyer behaviors. A seller pricing scheme is a function $p(\bvec{x})$ that specifies the sale price for any bundle $\bvec{x}$. By convention,  $p(\bvec{0}) = 0$ always.  Let the set $\P$ denote the set of all pricing \emph{functions} that are allowed to use by the buyer.  The majority of this paper will  focus on the textbook-style \emph{linear} pricing scheme  \cite{mas1995microeconomic}. A linear pricing scheme is parameterized by a price vector $\bvec{p} $ (to be designed) such that $p (\bvec{x})= \bvec{p} \cdot \bvec{x}$ where $i$'th entry $p_i$   is interpreted as the unit price for goods $i$.   Let set \[\P_L = \{p: p(\bvec{x}) = \bvec{p} \cdot \bvec{x} \text{ for some } \bvec{p}   \in \mathbb{R}^d_{+} \}\]  denote the set of all linear pricing functions.  In Section \ref{sec:nonlinear-pricing} we will  also study the  broader classes of \emph{concave} pricing schemes where the set $\P$  consists  of  all  monotone non-decreasing concave  functions. 

 For any price function   $p \in \P$, a rational buyer looks to purchase bundle $\bvec{x}^*$ that maximizes his \emph{utility}; That is, $\bvec{x}^* = \arg \max_{\bvec{x} \in X} \big[ v(\bvec{x}) - p( \bvec{x})  \big]$. Ties are  broken in favor of the seller.\footnote{This is usually without loss of generality since the seller can always induce desirable tie breaking by providing a negligible additional incentive  to the buyer.}
The buyer's best response can thus be viewed as a function  $\bvec{x}^*(p)$ of the seller's price function $p(\bvec{x})$. Knowing that the buyer will best respond, the seller would like to pick the pricing function  $p^* \in \P$ to maximize her revenue. Formally, $p^*$ is the solution to the following bi-level optimization:
\begin{equation}\label{eq:opt-price}
	p^* = \argmax_{p \in \P}  [  p( \bvec{x}^*(p))  - c(\bvec{x}^*(p)) ],   \quad \text{ where }\,  \bvec{x}^*(p) = \argmax_{\bvec{x} \in X} \big[ v(\bvec{x}) - p( \bvec{x} ) \big] 
\end{equation}
The optimal solution $\big(p^*, \bvec{x}^*(p^*)\big)$ to such a bi-level optimization problem forms an \emph{equilibrium} of this pricing game.  More formally, this is often called the \emph{optimal}  Stackelberg equilibrium or  \emph{strong} Stackelberg equilibrium  \cite{conitzer2006computing,roth2016watch}. We   call $p^*$ the \emph{equilibrium pricing function} and $\bvec{x}^*$ the \emph{equilibrium bundle}.  Note that this is a challenging \emph{functional} optimization problem since the seller is picking a function $p \in \P$, while not a vector variable. However, when $\P = \P_L$ is the set of linear pricing scheme, the above problem becomes a bi-level variable  optimization problem since any $p(\bvec{x}) \in \P_L$ can be fully characterized by a price vector $\bvec{p}$. 


{\bf Terminologies from Convex Analysis. } We defer basic definitions like convex/concave functions and super/sub-gradients to Appendix \ref{app_sec:convex_background}, and only mention two useful notations here: (1) the set of super/sub-gradient for concave/convex function $f$ is denoted as $\partial f(\mathbf{x})$;  (2)  The  \emph{Bregman divergence} of a function $f$ is defined as   $D_{f}(\mathbf{z},\mathbf{x}) = f(\mathbf{z}) - f(\mathbf{x}) - \nabla f(\mathbf{x})\cdot [  \mathbf{z}- \mathbf{x}]$. $D_{f}(\mathbf{z},\mathbf{x})$ is an important distance notion and is strictly positive for strictly convex functions when  $\mathbf{z} \not = \mathbf{x}$.

\section{A Model of \texttt{P}ricing \texttt{A}gainst a \texttt{D}eceptive  Buyer in the \texttt{D}ark (\name) }\label{sec:model}


As mentioned in related work section \ref{sec:related}, the literature of  algorithms to learn pricing schemes from unknown buyers is massive. This work, however, takes a different perspective and seeks to understand how a buyer can strategically deceive the seller, through a simple yet effective class of manipulation strategies.  
Our model naturally captures a buyer's strategic responses to seller's pricing algorithms  when the seller has no prior knowledge about the buyer, i.e.,    pricing ``in the dark''.  

Thus, we study a buyer manipulation strategy that is  \emph{oblivious} to   any  pricing algorithm. That is, the buyer simply imitates a different   value function $u(\bvec{x})$ by consistently responding to \emph{any} seller acts  according to $u(\bvec{x})$. Consequently, whatever the seller learns will be with respect to this \emph{imitative value function} $u(\bvec{x})$.  Alternatively, one can think of the buyer as  \emph{committing} to always behave   according to   value function $u(\bvec{x})$. Nevertheless, the buyer's objective is still to maximize his \emph{true} utility by carefully crafting an imitative  value function $u(\bvec{x})$ to commit to. Given   the buye's commitment to  value function $u(\bvec{x})$, the best pricing scheme for  the seller is to use the optimal pricing function against buyer value function $u(\bvec{x})$.  Similar to $v(\bvec{x})$, the imitative value functions $u(\bvec{x})$ is  assumed to be concave and monotone non-decreasing as well. Let set $\C$ denote the set of all such functions. These  resulted in the following model of  \emph{Pricing Against a Deceptive buyer in the Dark} (\name{}): 
	\begin{itemize} 
		\item  The buyer with true value function $v(\bvec{x})$ (unknown to seller) commits to react optimally according to an \emph{imitative value function} $u(\bvec{x}) \in 		\C$. 
		\item  The seller learns the buyer's imitative value function $u(\bvec{x})$ and compute the optimal pricing function $p^* \in \P$ by solving bi-level Optimization Problem \eqref{eq:opt-price} w.r.t. $u(\bvec{x})$. 
		\item The buyer observes the seller's pricing function $p^*$, and then follows his commitment to react optimally w.r.t. to $u(\bvec{x})$ by purchasing bundle $\bvec{x}^* = \arg\max_{\bvec{x} \in X} [ u(\bvec{x}) - p( \bvec{x})]$. 
	\end{itemize}
 We remark that such  commitment to a fake value function is not uncommon  in the literature; similar assumptions have been adopted in many recent works in, e.g., auctions \cite{tang2018price}, general Stackelberg games \cite{gan2019imitative,birmpas2020optimally} and security games \cite{gan2019manipulating,nguyen2019imitative,nguyen2020decoding}.  The buyer's ability of making such a commitment  fundamentally comes from the fact that the seller has no prior knowledge about the buyer's true value function $v(\bvec{x})$, i.e., has to ``price in the dark''. We refer curious reader to  Appendix \ref{app_sec:commitment} for a more  detailed discussion about  this assumption. 
 
 Naturally,   the buyer  with true value function $v(\bvec{x})$  would like to find the optimal imitative value function $u^*(\bvec{x})$  to maximizes his utility. This results in the following  equilibrium definition. 
\begin{definition}[Equilibrium of \name]\label{def:equ}
The equilibrium of \name{} consists of the optimal imitative value \emph{function} $u^* \in \C $ that the buyer  commits to, the seller's optimal pricing function $p^* \in \P$  against $u^*(\bvec{x})$, and the buyer's response bundle $\bvec{x}^*\in X$. Formally,  $(u^*, p^*, \bvec{x}^*)$ is an equilibrium for a buyer with true value function $v(\bvec{x})$ to \name{} if 
\begin{eqnarray}\nonumber 
 u^* = \argmax_{u  \in \C}  [  v(\bvec{x}^*) - p^*(\bvec{x}^* ) ], \qquad   
 \text{ where } &&   p^* = \argmax_{p \in \P}  [  p( \bvec{x}^*)  - c(\bvec{x}^*) ],   \\ \label{eq:equilibrium-def} 
 \text{ where }  &&  \bvec{x}^* = \argmax_{\bvec{x} \in X} \big[ u^*(\bvec{x}) - p^*( \bvec{x} ) \big]. 
\end{eqnarray} 
 \vspace{-4mm}
\end{definition}
The equilibrium of \name{} gives rise to  a challenging \emph{tri-level functional optimization problem}.\footnote{Even for linear pricing where $\P = \P_L$, this is still a functional optimization problem since the buyer's imitative value function $u$ is an arbitrary function in $\C$. } Note that, the dependence of the buyer's objective $[  v(\bvec{x}^*) - p^*(\bvec{x}^* ) ]$ on $u$ is indirectly through the two argmax problems afterwards.  The buyer is assumed to  know the production cost function $c(\bvec{x})$, which is needed to compute his optimal $u^*$. This can be easily justified in situations where the seller has been on the market for some time, therefore her production cost gradually becomes public knowledge.

\section{The Equilibrium of \name{} under Linear Pricing, and    Implications}\label{sec:characterization}
In this section, we characterize  the  equilibrium of    \name{}  under  linear pricing schemes, i.e., $\P = \P_L=\{p: p(\bvec{x}) = \bvec{p} \cdot \bvec{x} \text{ for some } \bvec{p}\in \mathbb{R}^d_{+} \}$  consists of all linear pricing functions. A linear pricing function is determined by a non-negative price vector   $\bvec{p}\in \mathbb{R}^d_{+}$. To distinguish \emph{functionals} from \emph{vector variables},  we will  use $P_L= \{  \bvec{p}:  \bvec{p}\in \mathbb{R}^d_{+}\}$ to denote the set of all  possible non-negative price vectors so that each  $\bvec{p} \in P_L$ uniquely corresponds to a linear pricing function in $\P_L$.   Under linear pricing, the equilibrium in Def. \ref{def:equ}  denoted by $(u^*, \bvec{p}^*, \bvec{x}^*)$ is characterized by Eq. \eqref{eq:equilibrium-def} where $\P = \P_L$.
 


Even with linear pricing,  this is still a very challenging \emph{tri-level functional} optimization problem since $u^*$ is a function  chosen from the set of all possible functions from $X$ to $\RR_+$, denoted by set $\C$.  A first thought one might have is: why doesn't the buyer simply imitate $u^*(\bvec{x}) = c(\bvec{x})$. We will see later that this is  not ---  in fact far from being --- optimal since it   makes the game zero-sum and the seller will pick a price that guarantee 0 revenue, e.g., a price of $\infty$. This leads the trade to happen at 0 production, which  is clearly not optimal for the buyer. The optimal $u^*$ should provide some incentive for the seller to produce some amount $\bvec{x}^*$ that is  in some sense the best for the buyer. We will provide two concrete examples in the next section in Figure \ref{fig:one-dimensional-characterization2}.   

The main result of this section is the following  characterization for the equilibrium of \name{}. This general characterization does not depend on any specific property about function $v(\bvec{x}), c(\bvec{x})$.  

\begin{restatable}{restatethm}{primethmcharacterization}
\label{thm:characterization}In the  equilibrium of \name{} under linear pricing, the optimal buyer imitative  value function $u^*$  can w.l.o.g. be written as the following concave function parameterized by  production amount $ \bvec{x}^*\in X$ and a real value $p^* \in \RR_+$: 
	\begin{equation} \label{eq:optimal-format}
	u^*(\bvec{x}) =p^*  \cdot \min \{ \frac{x_1}{x^*_1}, \cdots \frac{x_d}{x^*_d}, 1\}   
	\end{equation}
	where 
	\begin{equation}\label{eq:optimal-xp}
\bvec{x}^* = \arg \max_{\bvec{x} \in X} \big[  v(\bvec{x}) - \sup_{\alpha \in [0,1)} \frac{  c( \bvec{x}) -c(\alpha \bvec{x}) }{1 - \alpha} \big] \quad \text{ and } \quad p^* =  \sup_{\alpha \in [0,1)} \frac{  c( \bvec{x}^*) -c(\alpha \bvec{x}^*) }{1 - \alpha} 
	\end{equation}
	
	Moreover, under imitative value function $u^*(\bvec{x})$,  
	\begin{enumerate}
		\item For any   vector $\lambda \in \Delta_d$ in the $d$-dimensional simplex,  the linear pricing scheme with  price vector   $\bvec{p}^* =  (\lambda_1  \frac{p^*}{x_1^*}, \lambda_2   \frac{p^*}{x_2^*}, \cdots, \lambda_d   \frac{p^*}{x_d^*})$ is optimal for $u^*$. 
		\item In any of the above optimal linear pricing schemes, the buyer's optimal  bundle response is always $\bvec{x}^*$ and the buyer payment will always equal $p^*$. 
		\item At equilibrium, the buyer surplus  is $[v(\bvec{x}^*) - p^*]$ and the seller revenue is $[p^* - c(\bvec{x}^*)]$.   
	\end{enumerate}
\end{restatable}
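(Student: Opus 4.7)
The proof plan splits the claim into two directions: (i) verify that the Leontief imitative $u^*$ in Eq.~\eqref{eq:optimal-format}, together with any price vector $\bvec{p}^* = (\lambda_i p^*/x_i^*)_i$ for $\lambda \in \Delta_d$, yields an equilibrium with bundle $\bvec{x}^*$, payment $p^*$, and the claimed surplus/revenue; and (ii) show that no other $u \in \C$ gives the buyer strictly higher surplus. Part (i) gives all three sub-claims of the theorem directly, and part (ii) confirms that $u^*$ is globally optimal.

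Direction (i) is a direct verification. Because $u^*(\bvec{x}) = p^* \cdot \min_i\{x_i/x_i^*,1\}$ is flat whenever $\min_i x_i/x_i^*$ stops increasing, the buyer never overshoots on any coordinate and his best response must lie on the ray $\{\alpha \bvec{x}^*:\alpha \in [0,1]\}$. Along this ray his utility reduces to $\alpha(p^* - \bvec{p}\cdot \bvec{x}^*)$, so he picks $\alpha=1$ when $\bvec{p}\cdot \bvec{x}^* < p^*$, $\alpha=0$ when the inequality reverses, and is indifferent at equality. Seller revenue is strictly less than $p^* - c(\bvec{x}^*)$ in the first case and at most $0$ in the second, so her optimum is to set $\bvec{p}\cdot \bvec{x}^* = p^*$ exactly; every claimed $\bvec{p}^*$ satisfies this since $\sum_i \lambda_i p^*/x_i^* \cdot x_i^* = p^* \sum_i \lambda_i = p^*$. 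Under seller-favorable tie-breaking she then maximizes $\alpha p^* - c(\alpha \bvec{x}^*)$ over $\alpha \in [0,1]$, and the definition $p^* = \sup_{\alpha \in [0,1)} \frac{c(\bvec{x}^*) - c(\alpha \bvec{x}^*)}{1-\alpha}$ is precisely the smallest price that makes $\alpha=1$ maximal; the surplus $v(\bvec{x}^*)-p^*$ and revenue $p^* - c(\bvec{x}^*)$ then follow.

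Direction (ii) hinges on the key lemma: for \emph{any} $u \in \C$ with resulting equilibrium bundle $\bvec{y} \in X$, price $\bvec{q}$, and total payment $q = \bvec{q}\cdot \bvec{y}$, one has $q \geq \sup_{\alpha \in [0,1)} \frac{c(\bvec{y}) - c(\alpha \bvec{y})}{1-\alpha}$. Granting this, the buyer's achievable surplus under any $u$ is bounded by $\max_{\bvec{y}\in X}[v(\bvec{y}) - \sup_{\alpha}\frac{c(\bvec{y}) - c(\alpha \bvec{y})}{1-\alpha}]$, which is exactly what the Leontief construction attains by Eq.~\eqref{eq:optimal-xp}. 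To prove the lemma, I would fix any $\alpha \in [0,1)$ and exhibit a seller deviation inducing $\alpha \bvec{y}$ (feasible because $X$ is convex with $\bvec{0}\in X$): pick any $\bvec{p}_\alpha \in \partial u(\alpha \bvec{y})\cap \RR_+^d$. The supergradient inequality $u(\bvec{y}) \leq u(\alpha \bvec{y}) + \bvec{p}_\alpha\cdot(\bvec{y}-\alpha\bvec{y})$ yields $\bvec{p}_\alpha\cdot(\alpha\bvec{y}) \geq \alpha\cdot \frac{u(\bvec{y}) - u(\alpha \bvec{y})}{1-\alpha}$, so the seller's no-deviation condition forces $q - c(\bvec{y}) \geq \alpha\cdot \frac{u(\bvec{y})-u(\alpha\bvec{y})}{1-\alpha} - c(\alpha\bvec{y})$. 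Separately, the optimality of $\bvec{y}$ under $\bvec{q}$ gives $\bvec{q}\in \partial u(\bvec{y})$, hence $u(\alpha\bvec{y}) \leq u(\bvec{y}) - (1-\alpha)q$, equivalently $\frac{u(\bvec{y})-u(\alpha\bvec{y})}{1-\alpha} \geq q$. Substituting this lower bound back yields $q \geq c(\bvec{y}) - c(\alpha\bvec{y}) + \alpha q$, which rearranges to $q \geq \frac{c(\bvec{y})-c(\alpha\bvec{y})}{1-\alpha}$.

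The main obstacle is technical: ensuring $\partial u(\alpha \bvec{y})\cap \RR_+^d$ is non-empty so that the seller's ``induce $\alpha \bvec{y}$'' deviation is genuinely available. For a monotone non-decreasing concave $u$ on a convex $X$, supergradients exist at relative interior points and are forced to be coordinate-wise non-negative by monotonicity (any negative component would contradict $u(\bvec{z}+\delta \bvec{e}_i)\geq u(\bvec{z})$). Boundary cases in which $\alpha\bvec{y}$ sits on $\partial X$ can be handled by a continuity/limiting argument along $\alpha' \to \alpha$. Once this technicality is in place, the rest of the proof is clean convex-analytic bookkeeping combining the two superdifferential inequalities at $\bvec{y}$ and $\alpha \bvec{y}$ with the seller's no-deviation condition.
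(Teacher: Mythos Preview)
Your argument is correct, and the organization differs meaningfully from the paper's proof. The paper proceeds by first fixing a target bundle $\bar{\bvec{x}}$ and showing (Lemma~\ref{lem:characterization-fix-y0}) that any optimal imitative function inducing $\bar{\bvec{x}}$ can be \emph{replaced} by the Leontief function with the same payment: it uses the characterization $\bvec{p}_u=\nabla_{\max}u(\bvec{x}_u)$ (Lemma~\ref{lem:optimalpricing}) to rewrite the buyer's problem as a functional program with constraint $\nabla_{\max}u(\bar{\bvec{x}})\cdot\bar{\bvec{x}}-c(\bar{\bvec{x}})\ge \nabla_{\max}u(\bvec{x}')\cdot\bvec{x}'-c(\bvec{x}')$ for all $\bvec{x}'$, then verifies feasibility of the Leontief substitute by a three-way case analysis on $\bvec{x}'$ (on the ray, dominating $\bar{\bvec{x}}$, or neither). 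Optimizing over $\bar{\bvec{x}}$ afterwards yields the theorem.

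You instead run a clean ``upper bound plus achievability'' argument: for arbitrary $u$ with resulting bundle $\bvec{y}$ and payment $q$, you combine two supergradient inequalities---one at $\bvec{y}$ (buyer optimality under $\bvec{q}$) and one at $\alpha\bvec{y}$ (to build a seller deviation price)---with the seller's no-deviation condition, to get $q\ge\frac{c(\bvec{y})-c(\alpha\bvec{y})}{1-\alpha}$ directly. This is the same core inequality the paper extracts inside its case analysis, but you reach it without the $\nabla_{\max}$ machinery or the case split, and the achievability side (your direction~(i)) is handled separately. What the paper's route buys is a structural statement (Lemma~\ref{lem:optimalpricing}) reused later for the concave-pricing section; what your route buys is a shorter, more transparent proof of this particular theorem. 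The only point requiring care in your version is the existence of a nonnegative supergradient at $\alpha\bvec{y}$, which you correctly flag; monotonicity of $u$ together with an interior/limiting argument handles it, and in fact the inequality $u(\alpha\bvec{y})\le u(\bvec{y})-(1-\alpha)q$ that you use does not even need $\bvec{q}\in\partial u(\bvec{y})$ globally---buyer optimality of $\bvec{y}$ over the feasible point $\alpha\bvec{y}$ already gives it.
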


Note that the $u^*$ described by Equation \eqref{eq:optimal-format} may not be the unique optimal buyer imitative value function, but it is one of the optimal ones. Moreover, any optimal imitative value function will result in the same buyer surplus and seller revenue. 
%


{\bf Interpretation of Theorem \ref{thm:characterization}. }  Before a formal proof,  it is worthwhile to take a closer look at the characterization of the buyer's optimal imitative value function $u^*$ characterized by Theorem \ref{thm:characterization} and the special pricing problem it ultimately induces.   At a high level, the buyer has a desirable amount of products $\bvec{x}^*$ in mind, characterized by the first equation in \eqref{eq:optimal-xp}. He would pretend that his value for $\bvec{x}^*$ equals $p^*$. Moreover, the value of any production amount $\bvec{x}$ will linearly depend on the  maximum  possible coordinate-wise \emph{fraction} of $\bvec{x}^*$ that $\bvec{x}$ contains, i.e.,  the $ \min \{ \frac{x_1}{x^*_1}, \cdots \frac{x_d}{x^*_d}, 1\} $ term in $u^*$.\footnote{Notably, the format of this  value function appears to have interesting connections to the well-known   \emph{Leontief   production function}  \cite{allen1967macro} which has the format of $\min_i \{ \frac{x_i}{a_i}\}$. However, leontief functions  are used  to describe seller's production \emph{quantities} as a function of quantities of different factors with no substitutability. It is an expected surprise that similar type of \emph{value} function turns out to be   optimal for buyer's strategic manipulation.  } 
 \begin{wrapfigure}{R}{0.25\textwidth}
	\centering
	\includegraphics[width=0.25\textwidth]{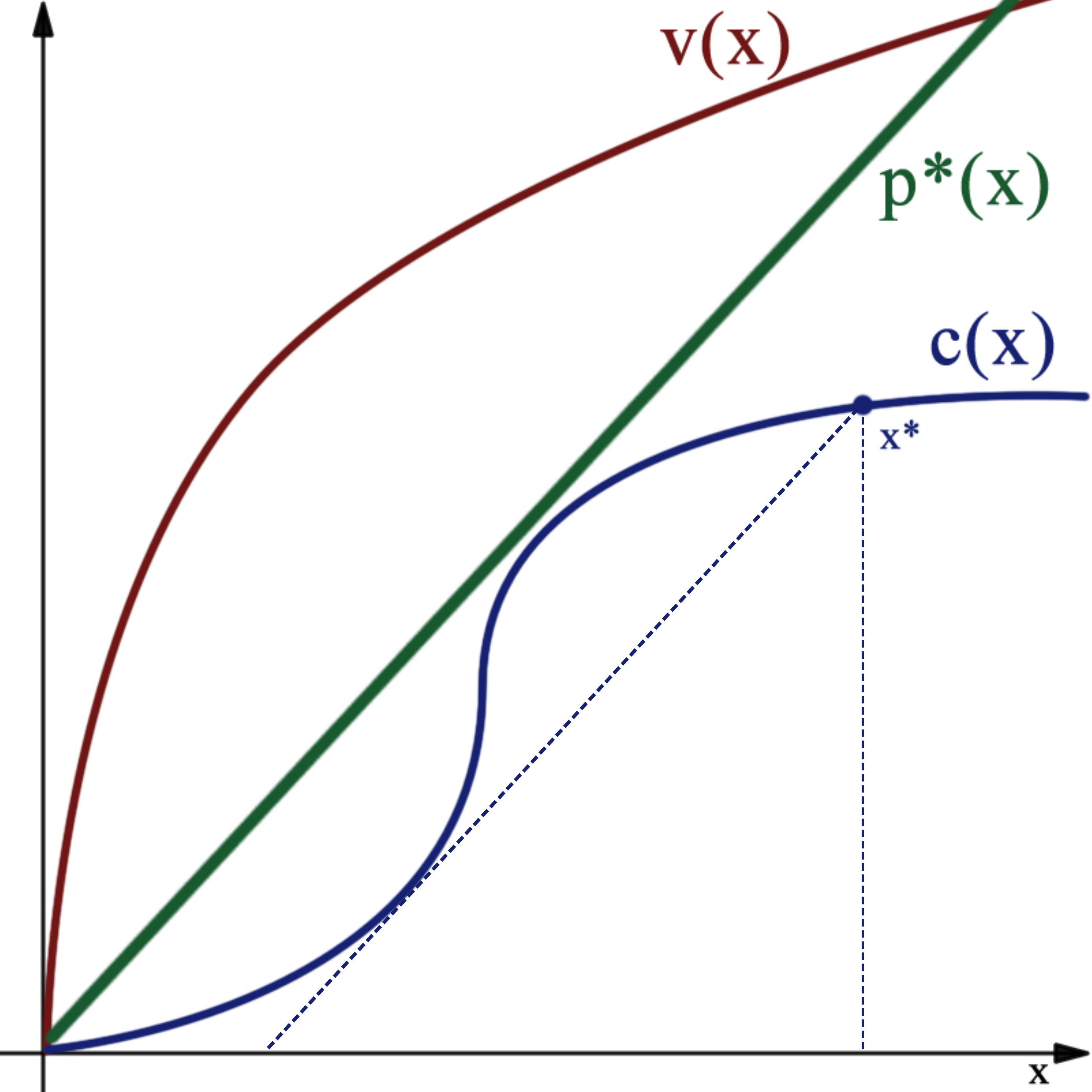}
	\caption{Illustration of $p^*$ in Theorem \ref{thm:characterization}.  }
	\label{fig:c_and_u}
\end{wrapfigure}
\
\

The $p^*$ is chosen as the largest possible \emph{slope} of the segment between $c( \bvec{x}^*)$ to $c( \alpha \bvec{x}^*)$ among all possible $\alpha \in [0,1)$ (see Figure \ref{fig:c_and_u} for an illustration in one dimension about how $p^*$ is chosen based on the cost function $c$). This is certainly a very carefully chosen value.  Note that  \[p^* =   \sup_{\alpha \in [0,1)} \frac{  c( \bvec{x}^*) -c(\alpha \bvec{x}^*) }{1 - \alpha}  \geq     \frac{  c( \bvec{x}^*) -c( \bvec{0}) }{1 - 0}  = c(\bvec{x}^*),\] which implies  non-negativity of  the seller's revenue $[p^* - c(\bvec{x}^*)]$. In fact, the seller can achieve \emph{strictly positive}  revenue if and only if the above $\geq$ is strict.  Finally, the  optimal buyer imitative value function $u^*$ leads to a special pricing problem for the seller.  Since the buyer's value is designed such that he is only interested in purchasing some fraction $r\in [0,1]$  of $\bvec{x}^*$, the seller's optimal pricing will    charge  $r p^*$ for the bundle $\bvec{x} = r \bvec{x}^*$  and this total charge $r p^*$ can be distributed \emph{arbitrarily} over the $d$ products, e.g., by charging  $\lambda_i  r p^* = \lambda_i \frac{p^*\cdot x_i}{x_i^*} $  for product $i$ where $\sum_{i=1}^d \lambda_i = 1$.  Overall, Theorem \ref{thm:characterization} fully characterized what the pricing problem is like at the equilibrium of \name{}.  
 
\begin{remark}\label{remark1}
Theorem \ref{thm:characterization} only provides a structural characterization about the equilibrium but does not imply that the equilibrium $(u^*, \bvec{p}^*, \bvec{x}^*)$ can be computed efficiently since we still need to solve the optimization problem  \eqref{eq:optimal-xp} to find the optimal $\bvec{x}^*$. As we will show later, this turns out actually to be an NP-hard problem in general. This is an interesting situation where the result reveals useful structural insights   despite  its computational intractability. 
\end{remark}

\vspace{-2mm}
\begin{proof}[ Proof Sketch of Theorem \ref{thm:characterization}.]  The  proof of Theorem \ref{thm:characterization} is somewhat involved. We give a sketch here and defer formal arguments to Appendix \ref{app_sec:proof-characterization}.   The most challenging part is to find the optimal function $u^* \in \C$, which is a \emph{functional optimization} problem.    Standard optimization analysis only apply to programs with vector variables, while not functional variables. To overcome this challenge, our proof has two major steps. First, 
 we  argue that the concave functions of the specific format as in Equation \eqref{eq:optimal-format} would suffice to help the buyer to achieve optimality. This effectively    reduce the functional optimization problem to a variable optimization problem since any function of  Format  \eqref{eq:optimal-format} can be characterized by $d+1$ variables. Second, we then analyze the variable optimization problem we get and prove characterization of its optimal solutions. The first step is the most involved part and uses a significant amount of convex analysis. Such complication comes from the reasoning over the tri-level optimization problem  \eqref{eq:equilibrium-def}. Tri-level optimization is generally highly intractable  \cite{ben1990computational}  --- indeed, as we will show later,  computing the equilibrium $(u^*, \bvec{p}^*, \bvec{x}^*)$ is NP-hard in general.  Nevertheless, our analysis was able to bypass the difficulty by leveraging the special structure of the pricing problem and leads to a clean and useful characterization for the structure of the equilibrium.

  A crucial intermediate step  is the following characterization for a slightly simpler version of the question. That is,  fixing any bundle $\bar{\bvec{x}} \in X$, which imitative value function $u(\bvec{x})$ will maximize the utility of the \emph{buyer} with true value function $v(\bvec{x})$, subject to that the optimal buyer purchase response under $u(\bvec{x})$  is $\bar{\bvec{x}}$? Fortunately,  this question indeed admits a succinct characterization as shown below.
 
 \begin{restatable}{lemma}{primelemmacharacterizationfixy}
 	\label{lem:characterization-fix-y0}
 	For any bundle $\bar{\bvec{x}} \in X$, the optimal buyer imitative   value function 
 	$\bar{u}(\bvec{x})$, subject to that the resultant optimal buyer purchase response is  bundle $\bar{\bvec{x}}$, can without loss of generality have the following piece-wise linear  concave function format, parameterized by a real number $ \bar{p} \in \RR$: 
 	\begin{equation} \label{eq:optimalformat_lem0}
 		\bar{u}(\bvec{x})  = \bar{p} \cdot \min \{ \frac{x_1}{\bar{x}_1}, \cdots \frac{x_d}{\bar{x}_d}, 1\}  
 	\end{equation}
 	where $\bar{p}  \in \RR$  is the solution to the following linear program (LP):
 	\begin{lp*} \label{lp:optimalformat_lemm}
 		\maxi{ v(\bar{\bvec{x}}) - p   }
 		\st 
 		\qcon{p - c(\bar{\bvec{x}}) \geq \alpha  \cdot p - c(\alpha \bar{\bvec{x}})}{   \alpha \in [0,1] }
 	\end{lp*}
 \end{restatable}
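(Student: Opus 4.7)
The plan is to establish the characterization by matching two bounds on the buyer's payment. Writing $p^{LP}$ for the LP's optimal value, I will argue (i) every $u \in \C$ whose induced equilibrium bundle is $\bar{\bvec{x}}$ forces a buyer payment of at least $p^{LP}$, and (ii) the Leontief-type $\bar{u}$ with $\bar{p}=p^{LP}$ is itself feasible and attains payment exactly $p^{LP}$. Since $\bar{\bvec{x}}$ is held fixed, the buyer's true utility depends on $u$ only through the payment via $v(\bar{\bvec{x}}) - p$, so these two bounds together show the Leontief form is without loss of generality optimal.

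For direction (i), fix any feasible $u \in \C$, let $\bvec{p}^*$ be the seller's optimal linear price, and write $P^* := \bvec{p}^* \cdot \bar{\bvec{x}}$. Since $\bar{\bvec{x}}$ is the buyer's best response at $\bvec{p}^*$, we have $\bvec{p}^* \in \partial u(\bar{\bvec{x}})$. The key step is to exhibit, for every $\alpha \in [0,1]$, an alternative pricing strategy that would induce $\alpha \bar{\bvec{x}}$ and yield seller revenue at least $\alpha P^* - c(\alpha \bar{\bvec{x}})$; seller optimality then forces the LP inequality. Concretely, the one-dimensional concave function $g(\alpha) := u(\alpha \bar{\bvec{x}})$ has non-increasing left-derivative, with $g'_-(1) \geq \bvec{p}^*\cdot\bar{\bvec{x}} = P^*$, so for each $\alpha$ there exists a supergradient $\bvec{p}^\alpha \in \partial u(\alpha \bar{\bvec{x}})$ with $\bvec{p}^\alpha \cdot \bar{\bvec{x}} \geq P^*$; monotonicity of $u$ further guarantees $\bvec{p}^\alpha \geq \zero$. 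Charging $\bvec{p}^\alpha$ makes $\alpha \bar{\bvec{x}}$ a buyer best response (by the supergradient inequality) and earns seller revenue $\alpha \bvec{p}^\alpha \cdot \bar{\bvec{x}} - c(\alpha \bar{\bvec{x}}) \geq \alpha P^* - c(\alpha \bar{\bvec{x}})$. Since $\bvec{p}^*$ is seller-optimal, $P^* - c(\bar{\bvec{x}}) \geq \alpha P^* - c(\alpha \bar{\bvec{x}})$ for every $\alpha \in [0,1]$, so $P^*$ is LP-feasible and hence $P^* \geq p^{LP}$.

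For direction (ii), I would first verify that $\bar{u}(\bvec{x}) = p^{LP} \cdot \min_i\{x_i/\bar{x}_i, 1\}$ lies in $\C$, being a minimum of non-decreasing linear functions. Then I characterize the seller's pricing problem under $\bar{u}$: for any $\bvec{p} \geq \zero$ and any $\bvec{y} \in X$, writing $r(\bvec{y}) := \min_i\{y_i/\bar{x}_i, 1\}$, the coordinate-wise inequality $\bvec{y} \geq r(\bvec{y}) \bar{\bvec{x}}$ with $\bvec{p} \geq \zero$ gives $\bar{u}(\bvec{y}) - \bvec{p}\cdot\bvec{y} \leq \bar{u}(r(\bvec{y})\bar{\bvec{x}}) - \bvec{p}\cdot r(\bvec{y})\bar{\bvec{x}}$, so a buyer best response lies on the ray $\{\alpha \bar{\bvec{x}}\}_{\alpha \in [0,1]}$. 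Along this ray the buyer's utility is linear in $\alpha$, so with seller-favored tie-breaking the seller's maximum revenue becomes $\max_{\alpha\in[0,1]} \alpha p^{LP} - c(\alpha \bar{\bvec{x}})$. The LP constraints say exactly that this maximum is attained at $\alpha = 1$ with value $p^{LP} - c(\bar{\bvec{x}}) \geq 0$, so the seller optimally induces $\bar{\bvec{x}}$ and the buyer pays $p^{LP}$.

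The main obstacle will be direction (i): an arbitrary concave $u$ can have complicated geometry, and the seller's optimization against it is in principle intractable, so we need a principled route from seller optimality to the LP constraints. The crucial simplification is the reduction to the one-dimensional concave function $g(\alpha) = u(\alpha \bar{\bvec{x}})$, which, via monotonicity of one-sided derivatives, supplies a concrete cheap pricing strategy for every $\alpha$ and converts the multi-dimensional problem into scalar reasoning. Minor technicalities---existence of supergradients when $\bar{\bvec{x}} \in \partial X$, coordinates with $\bar{x}_i = 0$, and degenerate ties in the seller's optimization---can be handled separately and do not alter the main line of argument.
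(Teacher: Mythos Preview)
Your proposal is correct and follows essentially the same route as the paper. The paper frames it as a replacement argument---take an arbitrary optimal $u^*$, construct the Leontief $\bar{u}$ with $\bar{p}=\nabla_{\max}u^*(\bar{\bvec{x}})\cdot\bar{\bvec{x}}$, and verify via case analysis that $\bar{u}$ is feasible for the same functional optimization problem with the same objective value---whereas you split the argument into a lower bound on the payment plus achievability. But the core step is identical: both proofs restrict attention to the ray $\{\alpha\bar{\bvec{x}}:\alpha\in[0,1]\}$, use concavity of $g(\alpha)=u(\alpha\bar{\bvec{x}})$ to get a supergradient at $\alpha\bar{\bvec{x}}$ whose inner product with $\bar{\bvec{x}}$ is at least $P^*$, and then invoke seller optimality to obtain the constraint $P^*-c(\bar{\bvec{x}})\geq \alpha P^*-c(\alpha\bar{\bvec{x}})$. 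Your direction~(ii) is a somewhat tidier repackaging of the paper's three-case feasibility check, and you avoid explicitly invoking the paper's preliminary Lemma~\ref{lem:optimalpricing} (that the optimal price equals $\nabla_{\max}u(\bar{\bvec{x}})$), needing only the weaker fact that it is \emph{some} supergradient; but neither difference is substantive.
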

\vspace{-6mm}
\end{proof}

\section{Explicit Characterizations for Convex and Concave Costs}\label{sec:instantiation}

In this section, we instantiate Theorem \ref{thm:characterization} to both convex and concave cost functions, arguably the most widely adopted two classes of cost functions. In both cases, we give more explicit characterizations of the equilibrium outcome, including the buyer's optimal imitative value function as well as both agents' payoffs.  




Convex production costs are widely adopted in many applications   \cite{beato1985marginal,turvey1969marginal}.  When $c(\bvec{x})$ is convex and differentiable, we show the following explicit characterization about the equilibrium outcome. A graphical visualization for this theorem is depicted in the left panel of Figure \ref{fig:one-dimensional-characterization2}.

\begin{restatable}{restatethm}{primethmcharacterizationone}
	\label{thm:characterization1} When $c$ is convex and differentiable,  the piece-wise linear concave  value function $u^*(\bvec{x})$ defined by Equation \eqref{eq:optimal-format}, with $p^* = \bvec{x}^* \cdot \nabla  c(\bvec{x}^*)$ and $\bvec{x}^* = \arg \max_{\bvec{x} \in X} ~[v(\bvec{x}) - \bvec{x} \cdot \nabla  c(\bvec{x})]$, is an  optimal buyer imitative value function. 

Under $u^*(\bvec{x})$, the trade  happens at bundle $\bvec{x}^*$ with payment $p^* = \bvec{x}^* \cdot \nabla  c(\bvec{x}^*)$.  The seller revenue $[\bvec{x}^* \cdot \nabla  c(\bvec{x}^*) - c(\bvec{x}^*)]$ is precisely the \emph{Bregman divergence} $D_{c}(\bvec{0},\bvec{x}^*)$ between $\bvec{0}$ and $\bvec{x}^*$. The buyer surplus is $[v(\bvec{x}^*) - \bvec{x}^* \cdot \nabla  c(\bvec{x}^*) ]$.
\end{restatable}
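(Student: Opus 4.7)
The plan is to specialize Theorem \ref{thm:characterization} to the case where $c$ is convex and differentiable. Concretely, the structural claims follow directly from Theorem \ref{thm:characterization} once we evaluate the supremum
\[
s(\bvec{x}) \;:=\; \sup_{\alpha \in [0,1)} \frac{c(\bvec{x}) - c(\alpha \bvec{x})}{1-\alpha}
\]
explicitly in terms of $\nabla c$. So the core technical task is to show that $s(\bvec{x}) = \bvec{x} \cdot \nabla c(\bvec{x})$ whenever $c$ is convex and differentiable.

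For the upper bound on $s(\bvec{x})$, I would apply the standard first-order characterization of convexity at the point $\bvec{x}$: for every $\alpha \in [0,1)$,
\[
c(\alpha \bvec{x}) \;\geq\; c(\bvec{x}) + \nabla c(\bvec{x}) \cdot (\alpha \bvec{x} - \bvec{x}) \;=\; c(\bvec{x}) - (1-\alpha)\, \bvec{x} \cdot \nabla c(\bvec{x}).
\]
Rearranging yields $\frac{c(\bvec{x}) - c(\alpha \bvec{x})}{1-\alpha} \leq \bvec{x} \cdot \nabla c(\bvec{x})$ uniformly in $\alpha$. For the matching lower bound, I would send $\alpha \to 1^-$ and use differentiability: the quotient is a directional difference quotient of $c$ along $\bvec{x}$ at $\bvec{x}$ in disguise, and its limit equals $\bvec{x} \cdot \nabla c(\bvec{x})$ (either by a direct one-line Taylor expansion $c(\alpha \bvec{x}) = c(\bvec{x}) - (1-\alpha)\,\bvec{x} \cdot \nabla c(\bvec{x}) + o(1-\alpha)$, or equivalently by L'H\^opital). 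Combining the two bounds gives $s(\bvec{x}) = \bvec{x} \cdot \nabla c(\bvec{x})$.

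Plugging this identity into the characterization in Theorem \ref{thm:characterization} immediately yields $\bvec{x}^* = \arg\max_{\bvec{x} \in X}\,[v(\bvec{x}) - \bvec{x} \cdot \nabla c(\bvec{x})]$ and $p^* = \bvec{x}^* \cdot \nabla c(\bvec{x}^*)$, and the claimed format of $u^*$ then follows verbatim from \eqref{eq:optimal-format}. The buyer surplus and seller revenue expressions come from parts (2)--(3) of Theorem \ref{thm:characterization}. Finally, to identify the seller revenue with the Bregman divergence, I would use the normalization $c(\bvec{0}) = 0$ and expand
\[
D_c(\bvec{0}, \bvec{x}^*) \;=\; c(\bvec{0}) - c(\bvec{x}^*) - \nabla c(\bvec{x}^*) \cdot (\bvec{0} - \bvec{x}^*) \;=\; \bvec{x}^* \cdot \nabla c(\bvec{x}^*) - c(\bvec{x}^*),
\]
which is precisely the seller's revenue $p^* - c(\bvec{x}^*)$.

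The only subtlety I anticipate is the limit argument as $\alpha \to 1^-$: one needs to be a bit careful that $\alpha \bvec{x} \in X$ for all $\alpha \in [0,1)$, which follows from convexity of $X$ together with $\bvec{0} \in X$. Aside from this mild bookkeeping, the argument is a clean convex-analytic computation, and the main conceptual content is really the recognition that the supremum in \eqref{eq:optimal-xp} collapses to the directional derivative $\bvec{x} \cdot \nabla c(\bvec{x})$ under convexity and differentiability --- which is exactly what produces the Bregman divergence at the end.
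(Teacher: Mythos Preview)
Your proposal is correct and follows essentially the same approach as the paper: specialize Theorem \ref{thm:characterization} by recognizing that, for convex differentiable $c$, the one-dimensional function $\alpha \mapsto c(\alpha\bvec{x})$ is convex and so the supremum $\sup_{\alpha\in[0,1)}\frac{c(\bvec{x})-c(\alpha\bvec{x})}{1-\alpha}$ equals its derivative at $\alpha=1$, namely $\bvec{x}\cdot\nabla c(\bvec{x})$. Your write-up is actually more detailed than the paper's (you explicitly split into the convexity upper bound and the $\alpha\to 1^-$ limit, and you spell out the Bregman-divergence identity), but the underlying argument is identical.
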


\begin{figure}[!ht]
	\centering
	\begin{subfigure}{.49\textwidth}
		\centering 
		\includegraphics[page=1,width=.5\linewidth]{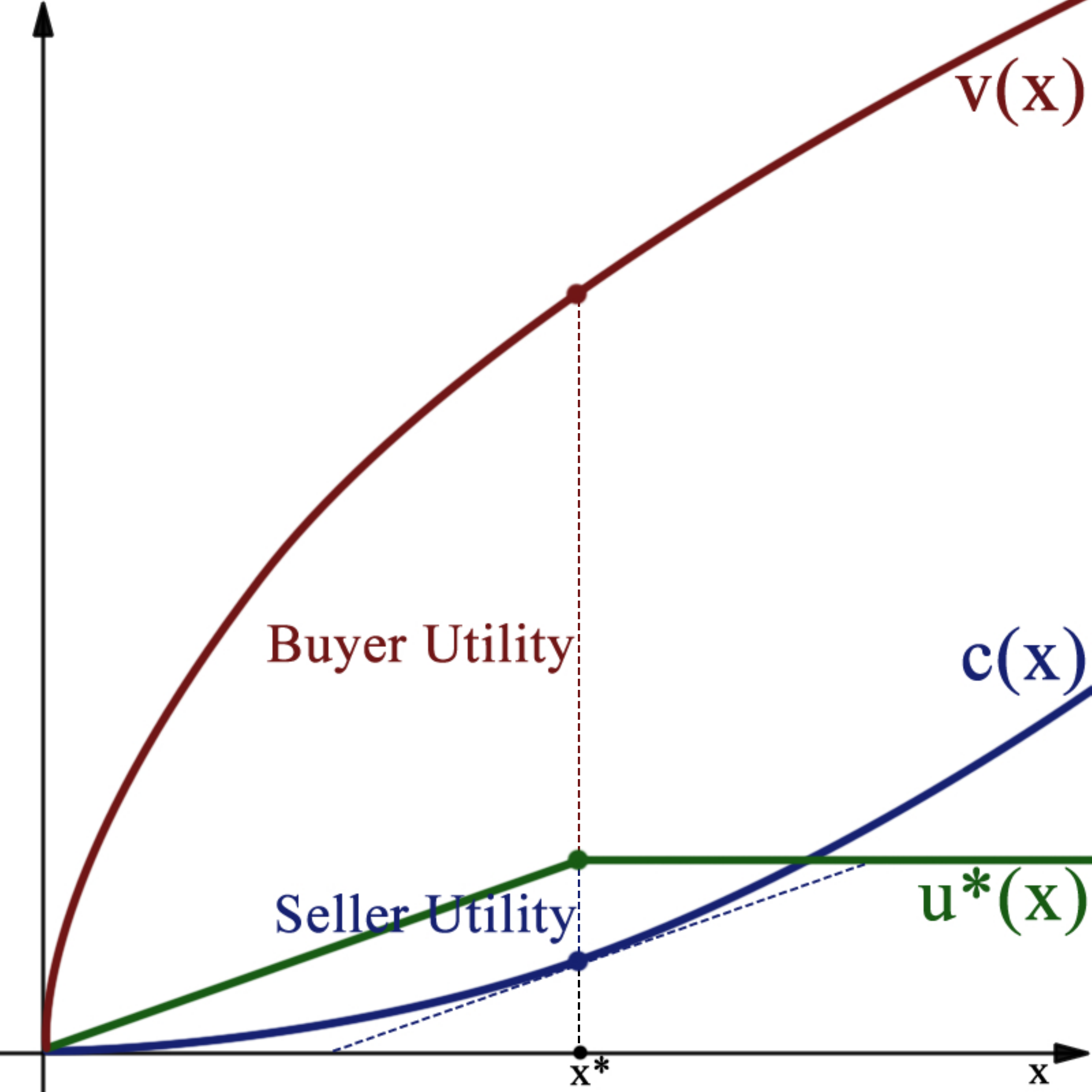} 
		\caption{An example for convex cost: $v(x) = 64 \sqrt{x}$, $c(x) = x^2$. Theorem \ref{thm:characterization1} implies $x^* = 4$, $p^* = \nabla c(x^*) = 8$, seller utility $D_c(0,4) = 16$ \\and buyer utility $v(x^*) -  p^* \cdot  x^* = 96$.}
		\label{fig:sub1}
	\end{subfigure}
	\begin{subfigure}{.49\textwidth}
		\centering
		\includegraphics[page=1,width=.5\linewidth]{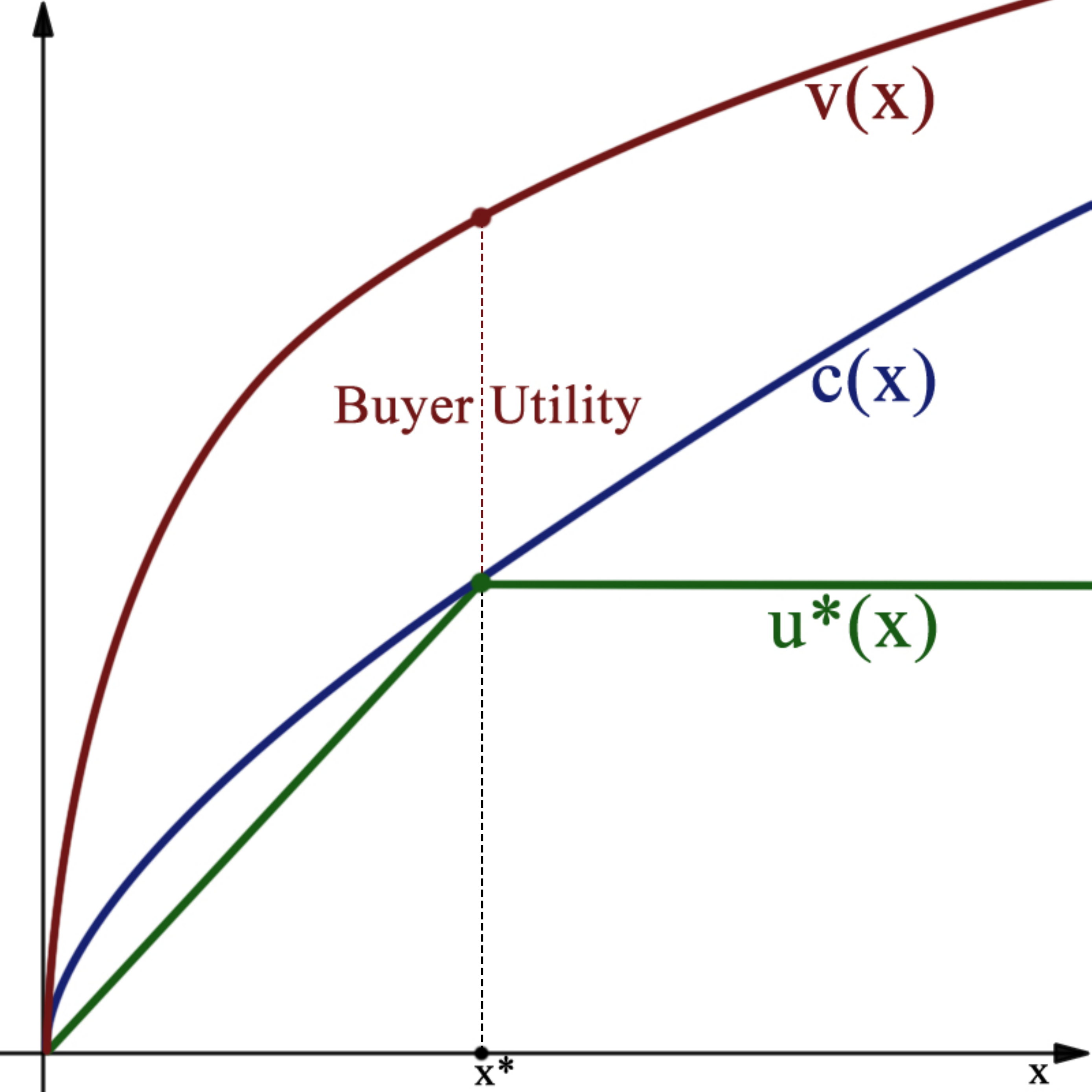} 
		\caption{An example of one-dimensional concave costs: $v(x) = 4  x^{\frac{1}{4}}$, $c(x) = \sqrt{x}$. Theorem \ref{thm:characterization2} implies $x^* = 16$, $p^* = \frac{c(x^*)}{x^*}= 1/4$, seller utility is $0$ and buyer utility $v(x^*) -  p^* \cdot x^* = 4$.}
		\label{fig:sub2}
	\end{subfigure}
	\caption{Illustration of the equilibrium characterization:  $v(x)$ is buyer's true value function, $c(x)$ is  the seller's cost function, and $u^*(x)$ is the  optimal buyer imitative value function. 
\vspace{-2mm}	
} 
	\label{fig:one-dimensional-characterization2}
\end{figure}

We now consider  concave costs and prove the following characterization. A graphical visualization for the theorem is depicted in the right panel of Figure \ref{fig:one-dimensional-characterization2}. 

\begin{restatable}{restatethm}{primethmcharacterizationtwo}
	\label{thm:characterization2} When  $c(\bvec{x})$ is     concave, the piece-wise linear concave  value function $u^*(\bvec{x})$ defined by Equation \eqref{eq:optimal-format}, with $p^* =  c(\bvec{x}^*)$  and $\bvec{x}^* =\arg \max_{\bvec{x} \in X} ~[v(\bvec{x}) - c(\bvec{x})]$, is an  optimal buyer imitative value function.  

Under $u^*(\bvec{x})$, the trade  happens at bundle $\bvec{x}^*$ with payment $p^* =  c(\bvec{x}^*)$.  The seller revenue will be $0$. The buyer extracts the maximum possible surplus   $ \max_{\bvec{x} \in X} [v(\bvec{x}) -    c(\bvec{x}) ]$. 
\end{restatable}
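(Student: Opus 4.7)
The plan is to derive Theorem~\ref{thm:characterization2} as a direct specialization of Theorem~\ref{thm:characterization}. All that must be done is to evaluate the inner supremum
\[
\phi(\bvec{x})\; := \;\sup_{\alpha \in [0,1)} \frac{c(\bvec{x}) - c(\alpha \bvec{x})}{1 - \alpha}
\]
under the assumption that $c$ is concave and satisfies $c(\bvec{0}) = 0$. Once $\phi(\bvec{x})$ is in closed form, the expressions for $\bvec{x}^*$, $p^*$, the seller revenue and the buyer surplus follow by plug-in.

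First I would fix $\bvec{x} \in X$ and study the univariate function $g(\alpha) := c(\alpha \bvec{x})$ on $[0,1]$. Since $c$ is concave and the map $\alpha \mapsto \alpha \bvec{x}$ is affine, $g$ is concave on $[0,1]$ with $g(0) = 0$. A standard fact from convex analysis says that for a concave $g$ on $[0,1]$ the chord slope $\alpha \mapsto \frac{g(1)-g(\alpha)}{1-\alpha}$ is non-increasing in $\alpha$: for $0 \le \alpha_1 < \alpha_2 < 1$, writing $\alpha_2$ as a convex combination of $\alpha_1$ and $1$ and applying concavity gives $g(\alpha_2) \ge \tfrac{1-\alpha_2}{1-\alpha_1} g(\alpha_1) + \tfrac{\alpha_2-\alpha_1}{1-\alpha_1} g(1)$, which rearranges to $\tfrac{g(1)-g(\alpha_2)}{1-\alpha_2} \le \tfrac{g(1)-g(\alpha_1)}{1-\alpha_1}$. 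Hence the supremum in the definition of $\phi(\bvec{x})$ is attained in the limit $\alpha \downarrow 0$, giving $\phi(\bvec{x}) = g(1) - g(0) = c(\bvec{x})$.

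Now I would substitute $\phi(\bvec{x}) = c(\bvec{x})$ into Equation~\eqref{eq:optimal-xp}. This collapses the outer maximization to
\[
\bvec{x}^* \;=\; \arg\max_{\bvec{x} \in X}\bigl[\, v(\bvec{x}) - c(\bvec{x})\,\bigr], \qquad p^* \;=\; c(\bvec{x}^*),
\]
so the claimed formula for $u^*$ in Equation~\eqref{eq:optimal-format} is an instance of the general characterization. The three remaining statements are then immediate: by Theorem~\ref{thm:characterization} the buyer purchases $\bvec{x}^*$ at payment $p^* = c(\bvec{x}^*)$, so the seller's revenue is $p^* - c(\bvec{x}^*) = 0$, and the buyer's surplus is $v(\bvec{x}^*) - p^* = v(\bvec{x}^*) - c(\bvec{x}^*) = \max_{\bvec{x} \in X}[v(\bvec{x}) - c(\bvec{x})]$. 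I would also briefly note that this surplus is tight from above: in any equilibrium the seller can always refuse to produce (which guarantees her non-negative revenue), so every produced bundle must satisfy $p(\bvec{x}) \ge c(\bvec{x})$, whence the buyer's surplus is bounded by the maximum social welfare $\max_{\bvec{x}}[v(\bvec{x}) - c(\bvec{x})]$.

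The main (and essentially only) technical obstacle is verifying that the supremum is achieved in the limit $\alpha \downarrow 0$ for concave costs; once the chord-slope monotonicity is in hand, everything else is an arithmetic consequence of Theorem~\ref{thm:characterization}. A minor care point is handling non-differentiability of $c$ (the formulation uses only $c$, not $\nabla c$, precisely because concave $c$ need not be differentiable), but the chord-slope argument above uses only concavity and the normalization $c(\bvec{0}) = 0$, so no smoothness is needed.
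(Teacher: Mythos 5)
Your proposal is correct and follows essentially the same route as the paper's proof: specialize Theorem~\ref{thm:characterization} by showing that, for concave $c$ with $c(\bvec{0})=0$, the quantity $\sup_{\alpha\in[0,1)}\frac{c(\bvec{x})-c(\alpha\bvec{x})}{1-\alpha}$ equals $c(\bvec{x})$ (attained at $\alpha=0$), so that $p^*=c(\bvec{x}^*)$ and $\bvec{x}^*=\arg\max_{\bvec{x}\in X}[v(\bvec{x})-c(\bvec{x})]$, from which the zero revenue and maximal buyer surplus follow by plug-in. Your explicit chord-slope argument and the remark that no differentiability is needed are slightly more careful than the paper's one-line appeal to concavity (the paper's proof even assumes differentiability unnecessarily), but the substance is the same.
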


\vspace{-2mm}
Graphical visualizations for the above two theorems are depicted in Figure \ref{fig:one-dimensional-characterization2}.  Note that both examples in Figure \ref{fig:one-dimensional-characterization2} show the sub-optimality of imitating $u^*(x) = c(x)$ for the buyer. To see this, note  $u^*(x) = c(x)$ makes the game zero-sum. To guarantee non-negative revenue, the seller must pick a price $p$ such that the line $p\cdot x$ is always above $u^*(x)$. A buyer imitating $u^*(x) = c(x)$ will end up purchasing a $0$ amount in both cases, and thus are sub-optimal for him.

An important conceptual message from Theorem \ref{thm:characterization2} is that when the seller production cost is concave and known to the buyer, the buyer can always come up with  an imitative value function which squeeze the seller's revenue to its extreme, i.e., $0$.\footnote{In real practice, the buyer can slightly deviate from his value function to given a small  $\epsilon$ amount of incentive for the seller to strictly prefer production.}  Under the seller's optimal imitative value function, the trade will happen at his most favorable bundle amount $\bvec{x}^*$ and  the seller pays just the cost function $c(\bvec{x}^*)$ to the seller.  

Theorem \ref{thm:characterization2}  is certainly bad news for the seller.  However, it is a \emph{descriptive}  result and only shows it is possible for the buyer to achieve maximum possible surplus.  Our next result brings somewhat good news to the seller. Specifically, we show it is NP-hard for the buyer to optimize, even approximately,  his optimal surplus. The hardness holds even when the production cost function $c(\bvec{x})$ is concave, in which case the surplus is characterized by Theorem \ref{thm:characterization2}. This shows that even though in theory the buyer can derive large surplus from strategic manipulation, even approximately figuring out such an optimal manipulation is  impossible in general, unless P = NP.

\begin{restatable}{restatethm}{primethmhardness}
	\label{prop:hardness}[Intractability of Equilibrium] 
	It is NP-hard  to approximate the buyer equilibrium  surplus in \name{}  games to be within ratio $1/d^{1- \epsilon}$ for any $\epsilon >0$. This hardness result holds  even when    the production cost function $c(\bvec{x})$ is concave and  the buyer's true value function $v(\bvec{x}) $ is simply the linear function $ \sum_{i=1}^d x_i$.  
\end{restatable}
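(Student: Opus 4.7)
The plan is to apply Theorem \ref{thm:characterization2} and reduce from Maximum Independent Set, invoking H\aa stad's $n^{1-\epsilon}$-inapproximability for $\alpha(G)$. Since the prescribed true value function $v(\bvec{x}) = \sum_i x_i$ is concave monotone and I will choose a concave cost, Theorem \ref{thm:characterization2} identifies the buyer's equilibrium surplus with $\max_{\bvec{x} \in X}\bigl[\sum_i x_i - c(\bvec{x})\bigr]$, so it suffices to exhibit an instance where computing this convex maximum approximates $\alpha(G)$.

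Given a graph $G=(V,E)$ with $|V|=n$, take $d=n$, the edge-relaxation polytope $X = \{\bvec{x}\in [0,1]^n : x_i + x_j \leq 1,\ \forall (i,j)\in E\}$ (convex, closed, containing $\bvec{0}$, described by $O(n+|E|)$ linear inequalities), and define
\[
c(\bvec{x}) \;=\; \tfrac{1}{2}\sum_{i=1}^n \bigl[1 - (1-x_i)^K\bigr], \qquad K := \lceil 2\log_2 n\rceil.
\]
Each summand $\tfrac{1}{2}(1 - (1-x_i)^K)$ is a concave, monotone non-decreasing univariate function vanishing at $x_i=0$, so $c$ meets the hypotheses of Theorem \ref{thm:characterization2}. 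Writing $g(y) := 2y - 1 + (1-y)^K$, the equilibrium surplus equals $\tfrac12\max_{\bvec{x}\in X}\sum_i g(x_i)$. Here $g$ is convex on $[0,1]$ (sum of linear and convex $(1-y)^K$), with $g(0)=0$, $g(1)=1$, $g(1/2) = 2^{-K}$, and monotone on $[1/2, 1]$ once $K\geq 4$. The \emph{lower bound} $\sum_i g(x_i) \geq \alpha(G)$ follows by plugging $\bvec{x}=\mathbf{1}_{S^*}$ for a maximum independent set $S^*$. For the matching \emph{upper bound}, for any $\bvec{x}\in X$ the set $T = \{i : x_i > 1/2\}$ must be independent (else $x_i + x_j > 1$ on some edge), giving $|T|\leq \alpha(G)$; monotonicity of $g$ on $[1/2,1]$ yields $g(x_i)\leq 1$ for $i\in T$, while convexity of $g$ on $[0,1/2]$ together with $g(0)=0$ gives $g(x_i)\leq g(1/2) = 2^{-K}$ for $i\notin T$. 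Hence $\sum_i g(x_i) \leq \alpha(G) + n\cdot 2^{-K} \leq \alpha(G) + 1/n$, so the buyer's equilibrium surplus lies in $[\alpha(G)/2,\ \alpha(G)/2 + 1/(2n)]$.

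Consequently, on H\aa stad's gap instances ($\alpha(G)\geq n^{1-\epsilon}$ vs.\ $\alpha(G)\leq n^{\epsilon}$), the surplus has a multiplicative gap of $\Omega(n^{1-2\epsilon})$, so any polynomial-time algorithm approximating the surplus within factor $d^{1-\epsilon'}$ (for any $\epsilon'>0$) would distinguish these cases and hence solve Max Independent Set to factor $n^{1-\epsilon'}$, contradicting the NP-hardness assumption. The main obstacle lies in selecting $c$: since the fractional LP $\max_{\bvec{x}\in X}\sum_i x_i$ can reach $n/2$ (e.g.\ on a clique), any linear or mildly concave penalty leaves room for fractional $\bvec{x}$ to dominate the integer IS optimum; yet $c$ must remain globally concave and monotone to satisfy the model's hypotheses. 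The exponent $K = \Theta(\log n)$ is what threads this needle — it sharpens $(1-x_i)^K$ enough that $g$ approximates the ``ReLU'' $\max(0, 2y-1)$ to within $O(1/n^2)$ per coordinate, simultaneously vanishing on the ``small'' side and preserving the full IS contribution of $1$ on the ``large'' side, which collapses the LP–IP gap and locks the optimum to $\alpha(G)/2$.
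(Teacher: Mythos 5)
Your proof is correct, but it takes a genuinely different route from the paper's. The paper keeps $X=[0,1]^d$ and encodes the graph entirely in a non-smooth concave cost $c(\bvec{x})=\sum_i \min(\sum_j a_{ji}x_j,\,x_i)$; it then shows the continuous optimum of $\max_{\bvec{x}\in[0,1]^d}[\sum_i x_i - c(\bvec{x})]$ equals $\alpha(G)$ \emph{exactly} (convexity of the objective forces a vertex optimum), and—because approximate fractional solutions are not a priori independent sets—it finishes with a randomized-rounding plus de-randomization step turning any approximate fractional solution into a comparable independent set. You instead put the graph into the feasible set (the edge-relaxation polytope) and use a separable, smooth concave cost with exponent $K=\Theta(\log n)$, which yields the two-sided sandwich $\alpha(G)/2 \le \text{surplus} \le \alpha(G)/2 + 1/(2n)$ via the threshold set $T=\{i: x_i>1/2\}$, and you conclude by a gap argument on hard MIS instances (strictly, the NP-hardness of that gap is Zuckerman's derandomization of H{\aa}stad, the same inapproximability the paper invokes). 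What each buys: the paper's construction needs no freedom in choosing $X$ and its rounding step directly addresses the constructive reading of the theorem (it is hard to \emph{find} a near-optimal $u^*$), at the cost of the de-randomization machinery; your construction avoids rounding entirely, and in fact your argument covers the constructive version too—any algorithm returning a bundle with surplus within factor $1/d^{1-\epsilon}$ of optimum can be evaluated to distinguish the gap instances, and the set $T$ of that bundle is itself an independent set of comparable size—though you only stated the value-estimation hardness explicitly, which is a presentational rather than a substantive gap. Your verifications (concavity and monotonicity of $c$ with $c(\bvec{0})=0$, convexity of $g$, $g(1/2)=2^{-K}$, $n2^{-K}\le 1/n$, feasibility of $\mathbf{1}_{S^*}$) all check out, and the instance has polynomial size since $K=O(\log n)$.
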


Formal proofs for both equilibrium characterizations and the hardness of approximating the buyer equilibrium surplus can be found in Appendix \ref{app_sec:proof-section-initantiation}. We note that an intriguing open question is whether optimal imitative value function can be computed for \emph{convex} production cost $c(\bvec{x})$.  In this case, the optimal bundle $\bvec{x}^* $ can be explicitly expressed as $\bvec{x}^* = \argmax_{\bvec{x}} [ u(\bvec{x}) - \bvec{x} \cdot \nabla c(\bvec{x})]$  for any value function $u$, however how to derive the optimal $u^*$ remains challenging.   
\section{The Risk of Over-exploiting Buyer's Incentives}\label{sec:nonlinear-pricing}
To counteract the buyer's strategic manipulation, one of the most  natural approaches  is perhaps to use a  richer or more powerful  class of pricing schemes, as opposed to  only using linear pricing.  Such additional power of pricing will always increase the revenue when facing an  honest buyer. Unfortunately, however, we show that it   does not necessarily help in the presence of buyer manipulation --- in fact, the seller may suffer the risk of \emph{over-exploiting the buyer's incentives} so that it becomes easier for the buyer to manipulate.  This phenomenon is similar in spirit to the \emph{overfitting} phenomenon in machine learning. That is,   using a richer hypothesis class does not necessarily reduce the testing error, though it does reduce the training error.  We believe that our findings in this section   partially explain why simple pricing schemes like linear pricing are  preferred in reality.   

We first prove that the strictly more general class of  \emph{concave} pricing schemes can never do better for the seller than the (much) restricted class of linear pricing. In fact, surprisingly,  the equilibrium of \name{} under concave pricing turns out to be exactly the same as the equilibrium under linear pricing. 
This time, our proof utilizes  a crucial observation that under concave pricing, the tri-level optimization problem of \name{}  can be reduced to solving a bi-level optimization problem (in particular, FOP \eqref{lp:optimal_deception_concave_pricing}). Recall that the proof of Theorem \ref{thm:characterization}  also reduces the tri-level FOP to a bi-level FOP through a characterization about the price and optimal buyer bundle in Lemma \ref{lem:optimalpricing}.  However,    Lemma \ref{lem:optimalpricing} does not hold any more  if   the seller uses the richer class of concave pricing schemes. Therefore,  the FOP \eqref{lp:optimal_deception_concave_pricing} we obtain here is different from the core FOP \eqref{lp:optimaldeception} we solve in the proof of Theorem \ref{thm:characterization}. Nevertheless, through careful convex analysis, we are able to show  that the optimal solution to FOP \eqref{lp:optimal_deception_concave_pricing} also admits an optimal solution of similar structure  as  characterized by Theorem \ref{thm:characterization}.  



\begin{restatable}{restatethm}{primethmnonlinearpricing}
 	\label{thm:non_linear_pricing_eqm}
The equilibrium of  \name{}  under \emph{concave} pricing is exactly the same as the equilibrium under \emph{linear} pricing as characterized by Theorem \ref{thm:characterization}. 
\end{restatable}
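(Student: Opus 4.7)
The plan is to mirror the high-level strategy behind Theorem \ref{thm:characterization}: collapse the tri-level functional optimization into a bi-level FOP via a reduction tailored to concave pricing, and then show that the optimum of that FOP retains the Leontief structure. The key observation driving the reduction is a clean characterization of the seller's best response: whenever the buyer commits to imitative value $u \in \C$, the seller's maximum revenue under concave pricing is exactly $\max_{\bvec{x} \in X}[u(\bvec{x}) - c(\bvec{x})]$. The upper bound follows from individual rationality---since the buyer's outside option of purchasing $\bvec{0}$ yields utility $0$, we must have $p(\bvec{x}^*) \leq u(\bvec{x}^*)$ at the chosen bundle, so revenue is at most $u(\bvec{x}^*) - c(\bvec{x}^*)$. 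The matching lower bound exploits that $u$ itself is concave, monotone, and vanishes at $\bvec{0}$, so $p := u$ is a valid concave pricing function; it makes the buyer indifferent across all bundles, and tie-breaking in the seller's favor lets her steer him to any $\bvec{x}^* \in \argmax[u - c]$.

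Substituting this closed-form seller response into the buyer's outer optimization yields the bi-level FOP \eqref{lp:optimal_deception_concave_pricing}: choose $u \in \C$ and $\bar{\bvec{x}} \in X$ to maximize $v(\bar{\bvec{x}}) - u(\bar{\bvec{x}})$ subject to $\bar{\bvec{x}} \in \argmax_{\bvec{x}}[u(\bvec{x}) - c(\bvec{x})]$. For any fixed target $\bar{\bvec{x}}$, the buyer wants $u(\bar{\bvec{x}})$ as small as possible. Specializing the argmax constraint to the ray $\bvec{x} = \alpha \bar{\bvec{x}}$ for $\alpha \in [0,1)$ gives $u(\alpha \bar{\bvec{x}}) \leq u(\bar{\bvec{x}}) + c(\alpha \bar{\bvec{x}}) - c(\bar{\bvec{x}})$, and combining it with the concavity bound $u(\alpha \bar{\bvec{x}}) \geq \alpha u(\bar{\bvec{x}})$ (which uses $u(\bvec{0}) = 0$) yields
\[
u(\bar{\bvec{x}}) \;\geq\; \sup_{\alpha \in [0,1)} \frac{c(\bar{\bvec{x}}) - c(\alpha \bar{\bvec{x}})}{1-\alpha} \;=:\; p^*(\bar{\bvec{x}}),
\]
which matches the quantity $p^*$ appearing in Theorem \ref{thm:characterization} exactly.

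To close the loop, I would verify that this lower bound is attained by the very Leontief-type function $u^*(\bvec{x}) = p^*(\bar{\bvec{x}}) \cdot \min\{x_1/\bar{x}_1, \ldots, x_d/\bar{x}_d, 1\}$ from Theorem \ref{thm:characterization}: it manifestly satisfies $u^*(\bar{\bvec{x}}) = p^*(\bar{\bvec{x}})$, and because any $\bvec{y}$ with $u^*(\bvec{y}) = p^* r$ componentwise dominates $r\bar{\bvec{x}}$, monotonicity of $c$ combined with the defining property of $p^*$ gives $\bar{\bvec{x}} \in \argmax[u^* - c]$. The buyer's outer problem then collapses to $\max_{\bar{\bvec{x}} \in X}\bigl[v(\bar{\bvec{x}}) - p^*(\bar{\bvec{x}})\bigr]$, which is precisely the optimization already solved in Theorem \ref{thm:characterization}, yielding the identical equilibrium pair $(u^*, \bvec{x}^*)$ with the same buyer surplus and the same seller revenue. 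I expect the reduction step to be the main obstacle: arguing rigorously that $p := u$ is optimal for the seller, and handling the tie-breaking across $\argmax[u - c]$ so that it funnels the intended bundle into the buyer's true payoff $v(\bvec{x}^*) - u(\bvec{x}^*)$, is exactly where the departure from linear pricing manifests---the elegant parameterization of optimal prices available for linear pricing (Lemma \ref{lem:optimalpricing}) no longer applies and must be replaced by the pointwise bound $p \leq u$ together with the seller-favoring tie-break. Once that reduction is in hand, the convex-analytic arguments that pinned down the Leontief form in Theorem \ref{thm:characterization} carry over with essentially no modification.
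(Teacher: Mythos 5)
Your proposal is correct and follows the same overall route as the paper: the identical reduction (against any committed $u\in\C$ the seller's optimal concave price is $p=u$ with seller-favoring tie-breaking, giving revenue $\max_{\bvec{x}\in X}[u(\bvec{x})-c(\bvec{x})]$), the identical bi-level FOP, and the identical conclusion that the Leontief-type function with height $\sup_{\alpha\in[0,1)}\frac{c(\bar{\bvec{x}})-c(\alpha\bar{\bvec{x}})}{1-\alpha}$ is optimal for each fixed $\bar{\bvec{x}}$, after which the outer search over $\bar{\bvec{x}}$ coincides with Theorem \ref{thm:characterization}. The one place you genuinely diverge is inside the key lemma: the paper takes an arbitrary optimal $u^*$ for FOP \eqref{lp:optimal_deception_concave_pricing}, replaces it by the Leontief function of height $u^*(\bar{\bvec{x}})$, and verifies feasibility and equal objective through a case analysis on $\bvec{x}'$ (on-ray via concavity of $u^*$, off-ray via monotonicity of $c$), whereas you prove directly that \emph{every} feasible $u$ must satisfy $u(\bar{\bvec{x}})\geq\sup_{\alpha}\frac{c(\bar{\bvec{x}})-c(\alpha\bar{\bvec{x}})}{1-\alpha}$ (combining the ray constraints with $u(\alpha\bar{\bvec{x}})\geq\alpha u(\bar{\bvec{x}})$) and then check the Leontief function attains this bound and is feasible. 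Your lower-bound-and-attain argument is shorter and skips the paper's two-step ``WLOG Leontief form, then optimize the height via an LP''; the paper's construction, on the other hand, deliberately parallels the proof of Lemma \ref{lem:characterization-fix-y0} under linear pricing, which makes the claim that the two FOPs have the same optimal structure immediate. Your attainment check (any $\bvec{y}$ with $u^*(\bvec{y})=p^*r$ componentwise dominates $r\bar{\bvec{x}}$, then invoke monotonicity of $c$ and the defining supremum) is exactly the feasibility verification needed, so no gap remains; you might just add the one-line remark, as the paper does, that a linear price vector is also optimal for the seller against the Leontief $u^*$ among all concave prices, so the equilibrium pricing itself (not only the surplus and revenue) matches the linear-pricing equilibrium.
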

\begin{proof}[Proof Sketch of Theorem \ref{thm:non_linear_pricing_eqm}.] 
We start by examining  how the use of  concave pricing schemes may simplify FOP \eqref{eq:equilibrium-def}. Recall that both value functions and pricing functions are monotone non-decreasing and normalized to be $0$ at  $\bvec{0}$. For any concave buyer value function $u \in \C$, it is easy to see that the optimal pricing function $p$ simply equals $u$ (i.e., charging buyer exactly his imitative value) and ask the buyer to break ties in favor of the seller by picking $\bvec{x}^* = \arg\max_{\bvec{x} \in X} [u^*(\bvec{x}) - c(\bvec{x})]$ to maximize the seller's revenue. Consequently, the use of concave pricing simplifies FOP \ref{eq:equilibrium-def} to the following bi-level functional optimization problem: 
\begin{eqnarray}\nonumber 
  u^* = \argmax_{u  \in \C}  [  v(\bvec{x}^*) - p^*(\bvec{x}^* ) ],    
\text{ where }   \bvec{x}^* = \argmax_{\bvec{x} \in X} \big[ u^*(\bvec{x}) - c( \bvec{x} ) \big] 
\end{eqnarray} 

We fix a particular bundle $\bar{\bvec{x}}$ and  examine   what buyer value function $u \in \C$ would maximize the buyer's utility subject to that the trade will happen at bundle $\bar{\bvec{x}}$.  This results in the following \emph{functional optimization problem} (FOP) for the buyer with \emph{functional variable} $u$. 
\begin{lp}\label{lp:optimal_deception_concave_pricing}
	\maxi{ v(\bar{\bvec{x}}) - u(\bar{\bvec{x}}) }
	\st 
	\qcon{ u(\bar{\bvec{x}}) - c(\bar{\bvec{x}}) \geq u(\bvec{x}') - c(\bvec{x}') }{ \bvec{x}' \in X }
\end{lp}
where the constraint means the seller's optimal price  for  value function $u$ is $u(\bvec{\bar{x}})$ and thus the buyer best response amount is indeed $\bar{\bvec{x}}$.   The remaining proof relies primarily on the following lemma. 
	\begin{restatable}{lemma}{primelemmanonlinearpricing}
	\label{lemma:non_linear_pricing}
 The following concave function is optimal to FOP \eqref{lp:optimal_deception_concave_pricing}: 
  \begin{equation}\label{eq:opt-format-concave}
 \bar{u}(\bvec{x}) =\bar{p}  \cdot \min \{ \frac{x_1}{\bar{x}_1}, \cdots \frac{x_d}{\bar{x}_d}, 1\},  \quad \text{ where }  \bar{p} = \sup_{\alpha \in [0,1)} \frac{  c(\bar{\bvec{x}}) -c(\alpha \bar{\bvec{x}}) }{1 - \alpha}. 
 \end{equation}
\end{restatable}
Given this characterization, the buyer simply needs to look for the best $\bar{\bvec{x}}$. This then leads to the same characterization as in Theorem \eqref{thm:characterization} since Equation \eqref{eq:opt-format-concave} is the same as the value function characterized in Equation \eqref{eq:optimalformat_lem0}. The proof of  Lemma \ref{lemma:non_linear_pricing} is deferred to Appendix \ref{app_sec:proof-non-linear-pricing}.
\end{proof}
	
Theorem \ref{thm:non_linear_pricing_eqm} shows that more general class of pricing schemes may not help the seller to obtain more revenue. One might then wonder whether it at least never hurts since if that is the case, at least it would  never be a worse choice.  Unfortunately, 
our following example   shows that  a richer class of pricing schemes may   bring \emph{strict harm} to the seller and \emph{strict benefit} to the seller.  

\begin{example}[ The Risk of Overexploiting Buyer Incentives] \label{ex:hurt}
 There is a single type of divisible good to sell, i.e., $d=1$. Let the seller's production cost function be the convex function $c(x) = x^2$ and let the buyer's true value function $v(x)$ be the following piece-wise linear concave function  
\[
v(x) =
\begin{cases}
10x & 0 \leq x \leq 0.81 \\
8.1 & x > 0.81 
\end{cases}
\] 
\end{example}

Let $\P_C$ denote  the set of all  concave pricing schemes. The following proposition completes Example \ref{ex:hurt} and its proof can be found in Appendix \ref{app_sec:proof_concave_sub_class}. 
\begin{restatable}{proposition}{primepropsubclass}
	\label{thm:concave_sub_class}
For the instance   in Example \ref{ex:hurt},  there exists  pricing scheme class $\P $ with $\P_L \subset \P \subset \P_C$  such that when the seller changes from  linear pricing  class  $\P_L$ to the richer  class   $\P$, the seller's revenue strictly \emph{decreases} and the buyer's surplus strictly \emph{increases} at the equilibrium of \name{}.  
\end{restatable}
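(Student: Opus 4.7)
The plan is constructive: I exhibit an explicit pricing class $\P$ with $\P_L \subsetneq \P \subsetneq \P_C$, together with an imitative value function that produces the two claimed strict inequalities at the equilibrium of \name{}. First, I compute the linear baseline for Example~\ref{ex:hurt}. Since $c(x)=x^2$ is convex and differentiable, Theorem~\ref{thm:characterization1} applies: the quantity $v(x)-2x^2$ equals $10x-2x^2$ on $[0,0.81]$ (non-decreasing, since $10-4x\ge 0$ there) and $8.1-2x^2$ on $(0.81,\infty)$ (strictly decreasing), so $x^*_L=0.81$. This yields payment $p^*_L=2(0.81)^2=1.3122$, seller revenue $D_c(0,x^*_L)=0.6561$, and buyer surplus $v(0.81)-1.3122=6.7878$. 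A second computation I record for later: the first-best total surplus $\max_x[v(x)-c(x)]$ is attained uniquely at $x=0.81$, with value $7.4439$.

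Next I define $\P = \P_L\cup\{p_0\}$ for a single carefully chosen concave, monotone, non-linear pricing function $p_0$. A natural candidate is a piecewise-linear concave $p_0$ with kink near $0.81$ and $p_0(0.81)$ strictly between $c(0.81)=0.6561$ and $1.3122$. Since $p_0$ is concave but not linear, $\P_L\subsetneq \P\subsetneq\P_C$. I then construct an imitative value function $u^*_\P\in\C$ distinct from the Leontief $u^*_L$ of Theorem~\ref{thm:characterization}, and verify by direct case analysis on the piecewise structure of $u^*_\P$ and $p_0$ two properties: (i) against $u^*_\P$, \emph{every} linear pricing in $\P_L$ extracts strictly less seller revenue than $p_0$ does, so that the seller's best response within $\P$ is indeed $p_0$; and (ii) the buyer's best response to $p_0$ under $u^*_\P$ --- combined with the seller-favorable tie-breaking on the flat region of $u^*_\P-p_0$ --- produces a trade bundle $\bar x$ with payment $p_0(\bar x)$ satisfying $v(\bar x)-p_0(\bar x) > 6.7878$.

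Given such a construction, both claimed strict inequalities follow quickly. Since at equilibrium the buyer maximizes his true surplus over $\C$, the equilibrium buyer surplus under $\P$ is at least $v(\bar x)-p_0(\bar x) > 6.7878$, yielding the strict increase. For the seller's side, let $(u^*,p^*,x^*)$ denote the equilibrium under $\P$; then buyer surplus plus seller revenue equals $v(x^*)-c(x^*)\le 7.4439$, so combining with the strict lower bound on the buyer surplus gives
\[
\text{seller revenue} \;<\; 7.4439 - 6.7878 \;=\; 0.6561,
\]
the strict decrease. Conceptually, this matches the ``overfitting'' analogy of the section: enlarging the pricing class gives the buyer more room to design an imitation that redirects the seller to a pricing which, while optimal among $\P$ against $u^*_\P$, is strictly worse than her linear baseline.

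The main obstacle is the joint construction of $(p_0, u^*_\P)$ so that the two properties in Step~3 hold simultaneously: the slope of $u^*_\P$ must be small enough that no linear pricing can profitably extract from it, yet $u^*_\P(\bar x)\ge p_0(\bar x)$ must hold so that the buyer actually accepts the trade at $\bar x$, and the payment $p_0(\bar x)$ must be strictly below $1.3122$. This requires a delicate balancing of slopes, heights and kink positions, and relies on seller-favorable tie-breaking to pin the trade down to $\bar x$. All other steps reduce to routine arithmetic once this construction is fixed.
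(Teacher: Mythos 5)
Your reduction from the construction to the two claimed inequalities is sound, and your seller-side argument is actually tidier than the paper's: lower-bounding the equilibrium buyer surplus by the surplus from one particular commitment, and then using the total-surplus bound $\text{buyer surplus} + \text{seller revenue} = v(x^*)-c(x^*) \le \max_x[v(x)-c(x)] = 7.4439$ to force the revenue below $0.6561$, avoids having to identify the exact equilibrium under the enlarged class (the paper instead asserts optimality of its chosen imitation, which your route does not need). The linear baseline numbers ($x^*_L=0.81$, $p^*_L=1.3122$, revenue $0.6561$, buyer surplus $6.7878$) also match the paper.

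However, there is a genuine gap: the proposition is an existence statement, and its entire content is the explicit pair $(p_0, u^*_\P)$ satisfying your properties (i) and (ii). You never produce it --- you yourself flag the ``delicate balancing of slopes, heights and kink positions'' as the main obstacle and leave it unresolved, so what you have is a proof scheme, not a proof. The crux you are missing is that property (i) is driven by the \emph{imitative value function}, not by the shape of $p_0$: if $u^*_\P$ is piecewise linear (Leontief-like, or any function with a kink at the target bundle), a linear price equal to the left slope at the kink already extracts essentially the full imitated value there, and no single added concave $p_0$ can strictly beat every linear price, so (i) fails. The paper's construction instead takes a strictly concave imitation $u(x)=\sqrt{x}$, for which the seller's best \emph{linear} response is computable ($p_L=1$, purchase $x'=0.25$, revenue $0.1875$), and augments $\P_L$ with the single concave function $\tilde p(x)=\min(\sqrt{x},\ \tfrac{5}{9}x+\tfrac{9}{20}-\epsilon)$ with $\epsilon=0.05$, whose slope matches $u'(0.81)=\tfrac{5}{9}$ so the committed buyer purchases $x=0.81$, giving seller revenue $0.85-0.6561=0.1939>0.1875$ and buyer surplus $8.1-0.85=7.25>6.7878$. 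Until you exhibit a concrete $(p_0,u^*_\P)$ and verify (i) by actually solving the seller's optimal linear-pricing problem against your chosen $u^*_\P$ (and (ii) by checking the committed buyer's best response to $p_0$), the two strict inequalities are not established.
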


\section{Conclusions}\label{sec:conclusion}

Motivated by optimal pricing against an unknown buyer, this paper put forwards a simple variant of the very basic pricing model by augmenting it with an additional stage of buyer commitment at the beginning. This motivation is driven by the seller's ignorance of the buyer's value function and thus have to price in the dark. We fully characterize the equilibrium of this new game model. The equilibrium reveals interesting insights about what the seller can learn, and how much seller revenue and   buyer surplus it may result in. We also show that more general class of pricing schemes may overfit the buyer's incentive and lead  to a pricing game that is even easier for the buyer to manipulate. 


Our results opens the possibilities for many other interesting questions. For example, given the risk of using a too general class of pricing schemes, what class of pricing schemes is a good compromise between extracting revenue and  robust to buyer manipulations? Is linear pricing scheme the best such class or any other pricing scheme?  Second, as the first study of our setup, we have chosen to  focus  on a simple setup where the seller has completely no knowledge about the buyer's value function. An interesting question is, how the seller's learning and resultant revenue may increase when the seller has some prior knowledge about the buyer's values. In fact, one natural modeling question  is how to model the seller's prior knowledge about the buyer's  value function. Is the prior knowledge   about which subclass the value functions are from or  about what distribution the parameters of the value functions are from?  Finally, our model assumes that the buyer has full knowledge about the seller. An ambitious though extremely intriguing question to ask is what if the buyer also has incomplete knowledge about the seller and how to analyze the equilibrium under the information asymmetry from both sides.

\bibliographystyle{abbrv}
\bibliography{main}

\newpage
\appendix

\section{Additional Discussions} 
\subsection{Additional Discussions on Related Works}\label{append:sec:related}
Our work is related to a recent study by  Tang and Zeng \cite{tang2018price} who study   the bidders' problem of committing to a fake type distribution in auctions and acting consistently as if the  bidder's  type were from the fake type distribution. This is similar in spirit to our buyer's commitment to an imitative value function. However, there are two key differences between our work and \cite{tang2018price}.  First, the seller in our model (realistically) has production cost where as the auction setting of \cite{tang2018price}  does not have production cost. This is an important difference because with $0$ production cost, the optimal manipulation in our case is trivial,  i.e. the buyer will imitate a value function of 0.  However, this trivial solution does not arise in the model of \cite{tang2018price} because in their setup there are multiple buyers (bidders) and the competition among bidders increases the auctioneer's revenue despite bidders' imitative or faking behaviors. This is the second key difference between our work and \cite{tang2018price}. Therefore,  our work illustrates  how the production cost can affect the buyer's strategic manipulation and the seller's revenue whereas the work of Tang and Zeng \cite{tang2018price}  sheds light on how the bidders' competition affect the auctioneer's mechanism design and ultimate revenue. Though these two aspects are not comparable,  we  believe they are both interesting for a deep understanding.

Another  very relevant literature is  learning the optimal prices or optimizing aggregated total revenue by repeatedly  interacting with a single buyer   \cite{amin2013learning,amin2014repeated,mohri2014optimal,mohri2015revenue,vanunts2019optimal}.  Similar to us,  they also consider the buyer's strategic behavior  that potentially tricks the seller's learning algorithm. However,   these previous works all focus on designing learning algorithms that can handle strategic data sources. Our work can be viewed as a complement to this literature. Instead of proposing new algorithms, we focus on understanding the limits of what learning algorithms can achieve by analyzing a basic model which is a   variant of the textbook-style optimal pricing model.  Moreover, the setups of these previous works are also different from us, which make them not comparable to us. For example,  some of these models \cite{amin2013learning,amin2014repeated,mohri2015revenue}  assume that the buyer's values for goods  are drawn from  distributions (a.k.a., demand distribution) and consequently his best responses to seller prices are  stochastic with randomness inherited from his value distribution. 
However,  our model  assumes that  the buyer has an unknown but fixed value function that drives his  purchase responses.  Such a response is the solution to buyer's optimization problem while not from a random distribution.   There have also been models that  consider unknown but fixed buyer values like us \cite{mohri2014optimal,vanunts2019optimal}. However, these works  have focused on a single indivisible good  with discrete agent utilities whereas  our model has \emph{multiple divisible} goods with continuous agent utilities. 

There have also been studies on learning the optimal prices from \emph{truthful} revealed preferences, i.e., assuming the buyer will honestly best respond to seller prices \cite{beigman2006learning,zadimoghaddam2012efficiently,balcan2014learning,pmlr-v119-zhang20f,amin2015online,roth2016watch,roth2020multidimensional}. Our works try to understand if the assumption of truthful revealed preferences does not hold and if the buyer will strategically respond to the seller's learning, what learning outcome could be expected when the buyer simply imitates a different value function that is  optimally chosen. From this perspective, these works serve as a key motivation for the present paper.

More generally, our work subscribes to the general line of research on learning from strategic data sources.  Strategic classification has been studied in other different settings or domains or for different purposes, including spam filtering  \cite{bruckner2011stackelberg},    classification under incentive-compatibility constraints \cite{zhangincentive},  online learning~\cite{dong:ec18,chen2020learning}, and understanding the social implications~\cite{akyol2016price,milli:fat19,hu:fat19}.  Finally, going beyond classification, strategic behaviors in machine learning has received significant recent attentions, including in regression problems \cite{Perote2004StrategyproofEF,dekel2010incentive,chen:ec18}, distinguishing distributions \cite{zhang2019distinguishing,zhang2019samples}.

\subsection{Additional Discussion  on Buyer's Commitment}\label{app_sec:commitment}
The buyer's ability of making such a commitment  fundamentally comes from the fact that the seller has no prior knowledge about the buyer's true value function $v(\bvec{x})$, i.e., has to ``price in the dark''.\footnote{By convention, all value functions are assumed to be concave.  This  can  be  equivalently   viewed as a restriction to the buyer's manipulation imposed by the seller's (very limited) prior  knowledge about concavity of buyer values. Later, we will  also briefly discuss how the absence of this prior knowledge may lead to   worst seller revenue. }  To find a good pricing scheme, the seller may   interact with the buyer to learn the his value function \cite{zadimoghaddam2012efficiently, beigman2006learning,balcan2014learning}, or  learn the optimal pricing scheme \cite{roth2016watch}, or directly optimize the aggregated revenue during repeated interactions \cite{amin2013learning,amin2014repeated,mohri2014optimal,mohri2015revenue,vanunts2019optimal}.  However, regardless what algorithm the seller may adopt, the buyer can always choose to consistently behave according to a carefully crafted different  value function $u(\bvec{x})$, i.e., the \emph{commitment}.  For example, suppose the seller tries to apply any machine learning algorithm, the buyer may respond by  directly announcing  his imitative value function $u(\bvec{x})$ even before the learning starts and then behave consistently. In such scenarios,   learning is even not needed since the best a seller can do is to respond with the optimal pricing against $u(\bvec{x})$. Similarly, if the seller adopts any  dynamic pricing mechanism, the buyer may respond similarly by announcing her value function $u(\bvec{x})$.   Since the seller  lacks  knowledge about the buyer's value, such imitative buyer behavior  makes him indistinguishable from a buyer who truly has value function $u(\bvec{x})$.  Therefore, our equilibrium characterization in later sections will help to understand   what   the optimal imitative value function for the buyer is and what revenue   the seller can possibly achieve when pricing against such a  strategic buyer in the dark, i.e., without any prior knowledge.

We remark that though the imitative strategy may not always be the absolutely best possible strategy for the buyer, but it  enjoys many advantages. First of all,  as we will prove later, this strategy does lead to significant improvement to the buyer's utility\footnote{It is never worse since the buyer can always at least behave truthfully by letting $u(\bvec{x}) = v(\bvec{x})$. }  and, in fact, is provably  the best possible (among all possible strategic behaviors that the buyer may  adopt)  in certain circumstances, e.g., when the seller's production cost function $c(\bvec{x})$ is concave. 
Second, this imitative strategy is easy to adopt in practice and requires no knowledge about the seller pricing scheme. For example,  this imitative strategy  works equally well for any price learning algorithm so long as it can effectively learn the optimal price from the buyer.  Third, it also has good long term effect since  the seller cannot distinguish whether the buyer truly has value function $u(\bvec{x})$ or not, and may just have to use the same learned price for  future purchases from this buyer. However, we  show later that any buyer behavior that is not consistently imitating a value function can be easily identified by the seller.  In such situations, even though the seller ended up with some prices, she knows that it is not the  truly optimal price for this buyer and may take this into account in future interactions.

\section{Technical Background: Concave/Convex Functions and Super/Sub-Gradients}\label{app_sec:convex_background}
Let  $f(\bvec{x}): X \to \RR$ be any function where $X \subset \RR^d$ is the domain of $f$.  A vector $\bvec{p} \in \RR^d$ is called a \emph{super-gradient} for $f$ at $\bvec{x} \in X$ if for any $\bvec{z} \in X$ we have $f(\bvec{z}) \leq f(\bvec{x}) + \bvec{p} \cdot (\bvec{z} - \bvec{x})$.   Function $f$ is called \emph{concave} if  for any $\bvec{x}, \bvec{z} \in \RR^d$ and any $\alpha \in [0,1]$ we have $ \alpha f(\bvec{x}) + (1-\alpha) f(\bvec{z}) \leq f(\alpha \bvec{x} +(1-\alpha) \bvec{z})$.  Super-gradients do not always exist. However, a concave function has at least one super-gradient at any $\bvec{x} \in X$. For a differentiable concave function $f$,  its  gradient $\nabla f(\bvec{x})$ is the only super-gradient at $\bvec{x}$ for any $\bvec{x}\in X$.  If $f$ is concave but not differentiable, it may have multiple super-gradients at some $\bvec{x}$. In this case, we use $\partial   f(\bvec{x})$ to denote the \emph{set} of all super-gradients of $f$ at $\bvec{x}$. 
Among all super-gradients in $\partial   f(\bvec{x})$, of our particular interest is the following one:  $\nabla_{\max} f(\bvec{x}) = \argmax_{\bvec{p} \in \partial   f(\bvec{x})} [\bvec{x} \cdot \bvec{p}]$.  This is  the super-gradient that maximize linear function $\bvec{p} \cdot \bvec{x}$.  
When $f$ is differentiable, $\nabla_{\max} f(\bvec{x}) =\nabla f(\bvec{x}) $  is the (only) super-gradient.

Function $f$ is called \emph{convex} if $-f$ is concave. 
For convenience of stating our results, we will mostly work with \textit{differentiable} convex functions in this paper.    A useful distance notion for differentiable convex function $f$  is the \emph{Bregman divergence}: $$D_{f}(\bvec{z},\bvec{x}) = f(\bvec{z}) - f(\bvec{x}) - \nabla f(\bvec{x})\cdot [  \bvec{z}- \bvec{x}].$$
$D_{f}(\bvec{z},\bvec{x}) $ is always non-negative for convex functions and strictly positive for strictly convex functions when  $\bvec{z} \not = \bvec{x}$. However, Bregman divergence is asymmetric among variables, i.e., $D_{f}(\bvec{z},\bvec{x})  \not = D_{f}(\bvec{x},\bvec{z}) $ in general.

\section{Proof of Theorem \ref{thm:characterization}}\label{app_sec:proof-characterization}

\primethmcharacterization*

  We start with a useful lemma that characterizes the relation between optimal seller price $\bvec{p}_u$ and optimal buyer bundle $\bvec{x}_u$ for any concave buyer utility function $u(\bvec{x})$.  A similar result  has been proved in \cite{roth2016watch}. The only difference here is that we allow any concave buyer utility function whereas the buyer utility function in \cite{roth2016watch} is  assumed to be strictly concave and differentiable.  Nevertheless, the proof remains similar and  thus is omitted due to space limit. 
 
 \begin{lemma} \label{lem:optimalpricing}
 	For any concave buyer value function $u(\bvec{x}): \mathbb{R}^{d}_+ \rightarrow \mathbb{R}_+$   reported by the buyer, let $\bvec{p}_u$ be the optimal price vector for the seller and  $\bvec{x}_u$ be the resultant buyer optimal bundle for purchase, then the following relation holds:
 	\begin{equation} \label{eq:optimalpricing} 
 		\bvec{p}_u = \nabla_{\max} u(\bvec{x}_u).    
 	\end{equation} 
 	where $\nabla_{\max} u(\bvec{x}) = \argmax_{\bvec{p} \in \partial   u(\bvec{x})} [\bvec{x} \cdot \bvec{p}]$.\footnote{The ``$\max$" comes from the fact that the seller will pick the profit-maximizing price if multiple prices  result in the same optimal buyer purchase.}
 \end{lemma}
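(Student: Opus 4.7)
The plan is to establish the identity $\bvec{p}_u = \nabla_{\max} u(\bvec{x}_u)$ through two complementary observations about the structure of optimal pricing against a concave-utility buyer. The trivial case $\bvec{x}_u = \bvec{0}$ contributes $\bvec{p}\cdot \bvec{x}_u = 0$ no matter what super-gradient is chosen, so I would focus on the substantive case $\bvec{x}_u \neq \bvec{0}$. The two steps are (i) buyer first-order optimality forces the charged price $\bvec{p}_u$ to lie in the super-gradient set $\partial u(\bvec{x}_u)$, and (ii) among all super-gradients at $\bvec{x}_u$, the seller has (weak) incentive to choose the one that maximizes $\bvec{p}\cdot \bvec{x}_u$, which is exactly $\nabla_{\max} u(\bvec{x}_u)$.

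For step (i), by definition $\bvec{x}_u = \argmax_{\bvec{x}\in X}[u(\bvec{x}) - \bvec{p}_u \cdot \bvec{x}]$. The objective is concave and the linear term has constant gradient $-\bvec{p}_u$, so first-order optimality for unconstrained concave maximization says $\bvec{0}$ lies in the super-differential of $u(\bvec{x}) - \bvec{p}_u\cdot \bvec{x}$ at $\bvec{x}_u$, equivalently $\bvec{p}_u \in \partial u(\bvec{x}_u)$. (If $\bvec{x}_u$ is on the boundary of $X$, one uses that super-gradients of a concave function on a convex domain still satisfy the tangent-plane inequality on $X$; since our domain $X$ is convex with $\bvec{0}\in X$, this is standard and can be cited.)

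For step (ii), I would run an exchange argument. Suppose, toward contradiction, $\bvec{p}_u \cdot \bvec{x}_u < \max_{\bvec{p}\in\partial u(\bvec{x}_u)} \bvec{p}\cdot \bvec{x}_u =: \tilde{\bvec{p}}\cdot \bvec{x}_u$, where $\tilde{\bvec{p}} = \nabla_{\max} u(\bvec{x}_u)$. Since $\tilde{\bvec{p}} \in \partial u(\bvec{x}_u)$, the super-gradient inequality $u(\bvec{z}) \le u(\bvec{x}_u) + \tilde{\bvec{p}}\cdot(\bvec{z} - \bvec{x}_u)$ for all $\bvec{z}\in X$ rearranges to $u(\bvec{z}) - \tilde{\bvec{p}}\cdot \bvec{z} \le u(\bvec{x}_u) - \tilde{\bvec{p}}\cdot \bvec{x}_u$, so $\bvec{x}_u$ belongs to the buyer's argmax set under $\tilde{\bvec{p}}$. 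Invoking the stated tie-breaking convention (ties broken in favor of the seller), the seller's revenue under $\tilde{\bvec{p}}$ is at least $\tilde{\bvec{p}}\cdot \bvec{x}_u - c(\bvec{x}_u) > \bvec{p}_u\cdot \bvec{x}_u - c(\bvec{x}_u)$, contradicting optimality of $\bvec{p}_u$. Hence $\bvec{p}_u \cdot \bvec{x}_u = \max_{\bvec{p}\in\partial u(\bvec{x}_u)} \bvec{p}\cdot \bvec{x}_u$, which together with $\bvec{p}_u \in \partial u(\bvec{x}_u)$ from step (i) identifies $\bvec{p}_u$ with the argmax $\nabla_{\max} u(\bvec{x}_u)$ (up to any remaining indeterminacy, which is precisely what the ``$\max$'' in the definition is designed to resolve).

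The main obstacle is the careful handling of the tie-breaking convention: the exchange step works precisely because switching to $\tilde{\bvec{p}}$ keeps $\bvec{x}_u$ available as an optimal buyer response, and the seller-favorable tie-breaking guarantees the seller can still realize the higher revenue $\tilde{\bvec{p}}\cdot \bvec{x}_u - c(\bvec{x}_u)$. A secondary subtlety is that $\partial u(\bvec{x}_u)$ may be non-singleton whenever $u$ is non-differentiable at $\bvec{x}_u$ or $\bvec{x}_u$ is on the boundary of $X$; this is exactly why the lemma specifies $\nabla_{\max}$ rather than an arbitrary super-gradient, and the exchange argument above is what selects the correct one.
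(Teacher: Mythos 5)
Your step (ii) is sound and is the real content of the lemma: replacing $\bvec{p}_u$ by $\tilde{\bvec{p}}=\nabla_{\max}u(\bvec{x}_u)$, observing via the super-gradient inequality that $\bvec{x}_u$ stays in the buyer's argmax, and invoking seller-favorable tie-breaking to contradict optimality of $\bvec{p}_u$ correctly rules out $\bvec{p}_u\cdot\bvec{x}_u<\nabla_{\max}u(\bvec{x}_u)\cdot\bvec{x}_u$. (For reference, the paper does not spell out a proof at all; it defers to the strictly concave, differentiable case of \cite{roth2016watch}, so the non-smooth bookkeeping is precisely what you are being asked to supply.) The gap is in step (i): buyer optimality over $X$ does \emph{not} give $\bvec{p}_u\in\partial u(\bvec{x}_u)$, because $\bvec{x}_u$ will typically sit on the boundary of $X$ while the super-differential is taken with respect to the full domain $\mathbb{R}^d_+$. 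Concretely, with $d=1$, $X=[0,1]$, $u(x)=2x$ and price $p=1$, the buyer's best response is $x=1$, yet $\partial u(1)=\{2\}\not\ni 1$; so ``first-order optimality for unconstrained concave maximization'' is unavailable there, and your parenthetical (that super-gradients still satisfy the tangent-plane inequality on $X$) argues the converse direction --- it certifies that a super-gradient price keeps $\bvec{x}_u$ optimal, not that an optimal price must be a super-gradient. This boundary situation is not exotic in this paper: for the Leontief-type imitative functions of Theorem \ref{thm:characterization}, the purchased bundle sits exactly at a kink where $\partial u(\bvec{x}^*)$ is a whole simplex of price vectors, so the membership claim is where all the care is needed.

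Because step (i) fails, your argument so far only yields the lower bound $\bvec{p}_u\cdot\bvec{x}_u\ge\nabla_{\max}u(\bvec{x}_u)\cdot\bvec{x}_u$; you still owe the matching upper bound, and only the seller-side exchange plus a downward-deviation argument together identify the price. The repair is to use buyer optimality against the scaled bundles $\alpha\bvec{x}_u$, which are feasible because $X$ is convex and $\bvec{0}\in X$: from $u(\bvec{x}_u)-\bvec{p}_u\cdot\bvec{x}_u\ge u(\alpha\bvec{x}_u)-\alpha\,\bvec{p}_u\cdot\bvec{x}_u$ one gets $\bvec{p}_u\cdot\bvec{x}_u\le\inf_{\alpha\in[0,1)}\frac{u(\bvec{x}_u)-u(\alpha\bvec{x}_u)}{1-\alpha}$, and for concave $u$ this infimum is the one-sided directional derivative along the ray through $\bvec{x}_u$, which equals $\max_{\bvec{g}\in\partial u(\bvec{x}_u)}\bvec{g}\cdot\bvec{x}_u$. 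Combined with your step (ii) this gives the payment identity $\bvec{p}_u\cdot\bvec{x}_u=\nabla_{\max}u(\bvec{x}_u)\cdot\bvec{x}_u$, which is what every downstream use of the lemma actually needs; the literal vector equation $\bvec{p}_u=\nabla_{\max}u(\bvec{x}_u)$ then follows only on top of a separate argument that $\bvec{p}_u\in\partial u(\bvec{x}_u)$ (immediate when $\bvec{x}_u$ is interior to $X$, and otherwise the statement should be read as ``some optimal price of this form exists''), rather than from buyer first-order conditions as you assert.
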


 A crucial intermediate step in our proof is the following characterization for a slightly simpler version of the question. That is,  fixing any bundle $\bar{\bvec{x}} \in X$, which imitative value function $u(\bvec{x})$ will maximize the utility of the \emph{buyer} with true value function $v(\bvec{x})$, subject to that the optimal buyer purchase response under $u(\bvec{x})$  is $\bar{\bvec{x}}$?   If we can find a succinct characterization for this simpler question, what remains is just to search for the best bundle $\bar{\bvec{x}}$.  That will be a  (variable) optimization problem,  which is suitable for standard optimization techniques to solve.  Fortunately,  the above question does admit a succinct characterization. 
 
 \begin{lemma}\label{lem:characterization-fix-y}[Lemma \ref{lem:characterization-fix-y0} restated]
 	\, For any bundle $\bar{\bvec{x}} \in X$, the optimal buyer imitative   value function 
 	$\bar{u}(\bvec{x})$, subject to that the resultant optimal buyer purchase response is  bundle $\bar{\bvec{x}}$, can without loss of generality have the following piece-wise linear  concave function format, parameterized by a real number $ \bar{p} \in \RR$: 
 	\begin{equation} \label{eq:optimalformat_lem}
 		\bar{u}(\bvec{x})  = \bar{p} \cdot \min \{ \frac{x_1}{\bar{x}_1}, \cdots \frac{x_d}{\bar{x}_d}, 1\}  
 	\end{equation}
 	where $\bar{p}$  is the solution to the following linear program (LP):
 	\begin{lp}\label{lp:buyer-simple-lemma}
 		\maxi{ v(\bar{\bvec{x}}) - p   }
 		\st 
 		\qcon{p - c(\bar{\bvec{x}}) \geq \alpha  \cdot p - c(\alpha \bar{\bvec{x}})}{   \alpha \in [0,1] }
 	\end{lp}
 	Moreover, under imitative value function $\bar{u}(\bvec{x})$, we have
 	\begin{enumerate}
 		\item For any convex coefficients $\lambda \in \Delta_d$,  the linear pricing scheme with unit price vector   $ (\lambda_1 \cdot \frac{\bar{p}}{\bar{x}_1}, \lambda_2 \cdot \frac{\bar{p}}{\bar{x}_2}, \cdots, \lambda_d \cdot \frac{\bar{p}}{\bar{x}_d})$ will be optimal. 
 		\item In any of the above optimal linear pricing schemes, the buyer's optimal  bundle response will always be  $\bar{\bvec{x}}$ and the buyer payment will equal $\bar{p}$.    
 	\end{enumerate}
 \end{lemma}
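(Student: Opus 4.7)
The plan is to prove the lemma in two parts: (i) show that under any concave $u \in \C$ whose induced seller-optimal pricing yields buyer response $\bar{\bvec{x}}$, the buyer's payment is at least $\bar{p}$; and (ii) verify that the Leontief function in Equation \eqref{eq:optimalformat_lem} together with any price vector $(\lambda_1 \bar{p}/\bar{x}_1,\ldots,\lambda_d \bar{p}/\bar{x}_d)$ realizes exactly payment $\bar{p}$ and bundle $\bar{\bvec{x}}$. Combining (i) and (ii) shows that $\bar{u}$ is WLOG optimal and simultaneously delivers the ``moreover'' conclusions. The high-level reason the proof escapes the tri-level difficulty here is that, because $\bar{\bvec{x}}$ is fixed, all that matters about $u$ is the buyer's induced total payment $p_u$, so once this scalar is lower-bounded by $\bar{p}$, exhibiting a matching Leontief function closes the gap.

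For part (i), I will apply Lemma \ref{lem:optimalpricing} to rewrite the buyer payment as $p_u = \bvec{p}_u \cdot \bar{\bvec{x}} = \max_{\bvec{p} \in \partial u(\bar{\bvec{x}})} \bvec{p} \cdot \bar{\bvec{x}}$. The central observation is then: for each $\alpha \in [0,1)$, there exists a supergradient $\bvec{p}_\alpha \in \partial u(\alpha \bar{\bvec{x}})$ satisfying $\bvec{p}_\alpha \cdot \bar{\bvec{x}} \geq p_u$. I plan to establish this by restricting $u$ to the ray $\lambda \mapsto u(\lambda \bar{\bvec{x}})$ and invoking the standard identification of the \emph{left} directional derivative in direction $\bar{\bvec{x}}$ with $\max_{\bvec{p} \in \partial u(\lambda \bar{\bvec{x}})} \bvec{p} \cdot \bar{\bvec{x}}$, together with the monotonicity of left derivatives of 1D concave functions. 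Because $\bvec{p}_\alpha \in \partial u(\alpha \bar{\bvec{x}})$ makes $\alpha \bar{\bvec{x}}$ a buyer best response under linear price $\bvec{p}_\alpha$, the seller could alternatively collect revenue $\bvec{p}_\alpha \cdot \alpha \bar{\bvec{x}} - c(\alpha \bar{\bvec{x}}) \geq \alpha p_u - c(\alpha \bar{\bvec{x}})$. Optimality of $\bvec{p}_u$ then forces $p_u - c(\bar{\bvec{x}}) \geq \alpha p_u - c(\alpha \bar{\bvec{x}})$ for every $\alpha \in [0,1)$, which is exactly the LP constraint in \eqref{lp:buyer-simple-lemma}, and hence $p_u \geq \bar{p}$.

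For part (ii), I will analyze the seller's problem under $\bar{u}$ by a short case-split on the sign of $\bvec{p} \cdot \bar{\bvec{x}} - \bar{p}$. Since any excess of a coordinate $x_i$ above $(\min_j x_j/\bar{x}_j)\cdot \bar{x}_i$ contributes $0$ to the Leontief value but positive payment whenever $\bvec{p} > \bvec{0}$, the buyer always optimizes along the ray and picks some $\alpha \bar{\bvec{x}}$ with net utility $\alpha(\bar{p} - \bvec{p} \cdot \bar{\bvec{x}})$. A direct check then shows that any price with $\bvec{p} \cdot \bar{\bvec{x}} \neq \bar{p}$ gives the seller strictly less than $\bar{p} - c(\bar{\bvec{x}})$, while $\bvec{p} \cdot \bar{\bvec{x}} = \bar{p}$ leaves the buyer indifferent across $\alpha$ and, by tie-breaking plus the LP constraint, yields revenue $\bar{p} - c(\bar{\bvec{x}})$ at $\alpha = 1$. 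Every simplex parameterization in the statement satisfies $\bvec{p} \cdot \bar{\bvec{x}} = \bar{p}$, which closes the moreover claims. I expect the main technical obstacle to be the supergradient-existence step in part (i): in higher dimensions $\partial u$ can be set-valued in complicated ways, and the naive use of the right directional derivative (which for concave $u$ equals the \emph{min} over supergradients) gives an inequality in the wrong direction. The resolution is to work with the left directional derivative along the ray, which yields both the correct ``max'' characterization and the monotonicity needed to compare against $p_u$ without invoking heavier subdifferential calculus.
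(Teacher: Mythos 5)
Your proposal is correct, but it is organized differently from the paper's proof, and the difference is worth noting. The paper proves the lemma by a solution-transformation argument: it takes an arbitrary optimal $u^*$ of the functional program, constructs the Leontief surrogate with $\bar p=\nabla_{\max}u^*(\bar{\bvec{x}})\cdot\bar{\bvec{x}}$, verifies via the same ray/off-ray case analysis (using concavity-induced monotonicity of the directional derivative along $\alpha\bar{\bvec{x}}$ and monotonicity of $c$) that the surrogate remains feasible with unchanged payoffs, and only then plugs the parametric Leontief form back into the functional program to obtain LP \eqref{lp:buyer-simple-lemma}. You instead run a bound-and-match argument: you show that \emph{every} feasible imitative $u$ forces a payment $p_u$ satisfying the LP constraints --- because the seller could deviate to a supergradient price at $\alpha\bar{\bvec{x}}$, which sells $\alpha\bar{\bvec{x}}$ at per-ray rate at least $p_u$ (your left-derivative/monotonicity step is exactly the concavity fact the paper also invokes) --- so the LP value upper-bounds the buyer's surplus, and then you check directly that the Leontief function with the LP-optimal $\bar p$ attains it, which simultaneously yields the ``moreover'' claims via the case split on $\bvec{p}\cdot\bar{\bvec{x}}$ versus $\bar p$. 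What your route buys: it never presupposes that an optimal $u^*$ exists in the function class (the paper's ``let $u^*$ be any optimal solution'' implicitly assumes attainment of the supremum), and it produces an explicit optimality certificate rather than a WLOG reduction; what the paper's route buys is that the LP falls out mechanically from the parametric form, so the constraint-simplification is localized in one place. Two small caveats in your write-up, both at the same level of informality as the paper itself: you should note that the supergradient $\bvec{p}_\alpha$ must be (or can be chosen) componentwise non-negative to be an admissible price in $P_L$, which follows from monotonicity of $u$; and your claim that any price with $\bvec{p}\cdot\bar{\bvec{x}}\neq\bar p$ yields \emph{strictly} less revenue is only weak when $\bvec{p}\cdot\bar{\bvec{x}}>\bar p$ and $\bar p=c(\bar{\bvec{x}})$ (zero-revenue case), but weak optimality of the stated prices plus seller-favorable tie-breaking is all the lemma needs.
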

 \begin{proof}[Proof of Lemma \ref{lem:characterization-fix-y}]
 	From the buyer's perspective,  with a fixed bundle $\bar{\bvec{x}}$  in mind,  his problem is to come up with an imitative value function $u(\bvec{x})$ such that its corresponding price $\nabla_{\max} u(\bar{\bvec{x}})$ as from Lemma \ref{lem:optimalpricing} maximizes his revenue at bundle $\bar{\bvec{x}}$.   This results in the following \emph{functional optimization problem} (FOP) for the buyer with \emph{functional variable} $u$. 
 	\begin{lp}\label{lp:optimaldeception}
 		\maxi{ v(\bar{\bvec{x}}) - \bar{\bvec{x}} \cdot \nabla_{max} u(\bar{\bvec{x}}) }
 		\st 
 		\qcon{ \nabla_{max} u(\bar{\bvec{x}}) \cdot \bar{\bvec{x}} - c(\bar{\bvec{x}}) \geq \nabla_{max} u(\bvec{x}') \cdot \bvec{x}' - c(\bvec{x}') }{ \bvec{x}' \in X }
 		\con{u \text{ is concave}}
 	\end{lp}
 	where the first constraint means the seller's optimal price for  value function 
 	$u(\bvec{x})$ is $\nabla_{max} u(\bar{\bvec{x}}) $ and thus the buyer best response bundle is indeed $\bar{\bvec{x}}$.  
 	
 	The lemma states that the $\bar{u}(\bvec{x})$ defined in Equation \eqref{eq:optimalformat_lem} is an  optimal solution to FOP \eqref{lp:optimaldeception}. To analyze FOP \eqref{lp:optimaldeception}, we first simplify the class of concave function $u$ that we need to consider. In particular, we claim that there always exists an optimal solution to FOP \eqref{lp:optimaldeception} such that $u(\bvec{x})$ has the following   form:
 	\begin{equation} \label{optimalformat2}
 		u(\bvec{x}) =p  \cdot \min \{ \frac{x_1}{\bar{x}_1}, \cdots \frac{x_d}{\bar{x}_d}, 1\}  
 	\end{equation}
 	where  the only \emph{parameter} $p \in \RR$  is the  coefficient vector. To prove this, let $u^*(\bvec{x})$ be any optimal solution to FOP  \eqref{lp:optimaldeception}.  Construct another concave value function $\bar{u}(\bvec{x})$ as follows:
 	
 	\begin{equation} \label{optimalformat_specific}
 		\bar{u}(\bvec{x}) =\bar{p} \cdot \min \{ \frac{x_1}{\bar{x}_1}, \cdots \frac{x_d}{\bar{x}_d}, 1\} , \text{ where } \bar{p} = \nabla_{\max} u^* (\bar{\bvec{x}}) \cdot \bvec{\bar{x}}  
 	\end{equation}
 	
 	%
 	Note that $\bar{p}$ is precisely the payment for  a buyer with value function $u^*$ when his optimal bundle amount is   $\bar{\bvec{x}}$.  We show that this new value function will result in the same  optimal buyer bundle $\bar{\bvec{x}}$ and payment  $\bar{p}$. It thus does not change either of the agent's utilities and remains optimal for the buyer.  

 	First, we argue that the constructed $\bar{u}$ is still feasible to FOP \eqref{lp:optimaldeception}.  Concavity of $\bar{u}$ is evident since it is the minimum of a set of linear functions. 
 	By Lemma \ref{lem:optimalpricing},  we have $\nabla_{\max} \bar{u}(\bvec{x}) = 0$  if $\bvec{x}$ is element-wise strictly greater than $\bvec{\bar{x}}$. Otherwise, let $i^* = \argmin_{\{i|x_i \leq \bar{x}_i\}} \frac{x_i}{\bar{x}_i} $,\footnote{If there are multiple $i$ that all minimize $\frac{x_i}{\bar{x}_i} $, the proof is valid by picking any of them. }  we have $[\nabla_{\max} \bar{u}(\bvec{x})]_{i^*} = \frac{\bar{p}}{\bar{x}_{i^*}} $ which is the $i^*$'th element of $\nabla_{\max} \bar{u}(\bvec{x})$, while all the other gradient entries of $\nabla_{\max} \bar{u}(\bvec{x})$ are $0$. Therefore, we have $ \nabla_{\max} \bar{u}(\bvec{x}) \cdot \bvec{x} = \bar{p}   \cdot  \frac{x_{i^*}}{\bar{x}_{i^*}} $. Specifically, $\nabla_{\max} \bar{u}(\bar{\bvec{x}}) \cdot \bvec{\bar{x}}=\bar{p}$ which equals precisely the buyer payment $ \nabla_{\max} u^*(\bar{\bvec{x}}) \cdot \bvec{\bar{x}}$ under utility $u^*$.  
 	
 	We now verify that the constraints in FOP \eqref{lp:optimaldeception} still holds for $\bar{u}$.  We start from verifying for special $\bvec{x}'$s where there exists $\alpha \in [0,1]$   such that $\bvec{x}' =\alpha \cdot \bar{\bvec{x}}$.  In this case, we have 
 	\begin{align*} 
 		\nabla_{\max} \bar{u}(\bar{\bvec{x}}) \cdot \bar{\bvec{x}} - c(\bar{\bvec{x}}) &= \nabla_{\max} u^*(\bar{\bvec{x}}) \cdot \bar{\bvec{x}} - c(\bar{\bvec{x}}) & \mbox{(since $\nabla_{\max} \bar{u}(\bar{\bvec{x}}) \cdot \bvec{\bar{x}}= \bar{p} =  \nabla_{\max} u^*(\bar{\bvec{x}}) \cdot \bvec{\bar{x}}$)}\\  
 		& \geq \nabla_{\max} u^*( \bvec{x}') \cdot \bvec{x}'  - c(\bvec{x}') & \mbox{(by feasibility of $u^*$ ) }\\ 
 		& \geq \nabla_{\max} u^*(\bar{\bvec{x}}) \cdot \bvec{x}'  - c(\bvec{x}') & \mbox{(by concavity of $u^*$) }\\  
 		& = \alpha \cdot \bar{p}  - c(\bvec{x}') & \mbox{(since $\bvec{x}' = \alpha \cdot \bvec{\bar{x}}$ and $\bar{p} =\nabla_{\max} u^*(\bar{\bvec{x}}) \cdot \bvec{\bar{x}}$) } \\  
 		& = \nabla_{\max} \bar{u}(\bvec{x}') \cdot \bvec{x}' - c(\bvec{x}'). & \mbox{(by definition of $\bar{u}$ and $\bvec{x}' = \alpha \cdot \bar{\bvec{x}}$) } 
 	\end{align*}	
 	Specifically, the second inequality   holds because $\nabla_{\max} u^*(\alpha\bar{\bvec{x}}) \cdot \bar{\bvec{x}}/|\bar{\bvec{x}}|$ is the directional derivative of $u^*$ at $\alpha\bar{\bvec{x}}$ in the direction of $\bar{\bvec{x}}$, which is non-increasing with respect to $\alpha$ due to the concavity of $u^*$.
 	%
 	The above argument also implies $ \nabla_{\max} \bar{u}(\bar{\bvec{x}}) \cdot \bar{\bvec{x}} - c(\bar{\bvec{x}}) \geq 0$ by instantiating $\bvec{x}' = \bvec{0}$.  
 	
 	Next, we consider the case when $\bvec{x}' \neq \alpha \cdot \bar{\bvec{x}}$ for $\alpha \in [0,1]$.  There will be two possible situations to consider:
 	\begin{enumerate}
 		\item If $\bvec{x}' $ is element-wise greater than $\bvec{\bar{x}}$, then $ \nabla_{\max} \bar{u}(\bvec{x}') = 0$. Thus, we have $\nabla_{\max} \bar{u}(\bar{\bvec{x}}) \cdot \bar{\bvec{x}} - c(\bar{\bvec{x}}) \geq 0 > \nabla_{\max} \bar{u}(\bvec{x}') \cdot \bvec{x}' - c(\bvec{x}') = 0 - c(\bvec{x}')$.
 		\item If $\bvec{x}' $ is \textit{not} element-wise greater than $\bvec{\bar{x}}$, let $i^* = \argmin_{\{i|x'_i \leq \bar{x}_i\}} \frac{x'_i}{\bar{x}_i} $ and $\alpha^* = \min_{\{i|x'_i \leq \bar{x}_i\}} \frac{x'_i}{\bar{x}_i} $ . Then we have  $[\nabla_{\max} \bar{u}(\bvec{x}')]_{i^*} = \frac{ \bar{p}}{\bar{x}_{i^*}} $ which is the $i^*$'th element of $\nabla_{\max} \bar{u}(\bvec{x}')$, while all the other elements of $\nabla_{\max} \bar{u}(\bvec{x}')$ are $0$. In addition, denote $\hat{\bvec{x}}' = \alpha^* \cdot \bar{\bvec{x}}$. Note that for $\bvec{x}' $, we have  $\nabla_{\max} \bar{u}(\bvec{x}') \cdot \bvec{x}' = \alpha^* \cdot \bar{p} =  \nabla_{\max} \bar{u}(\hat{\bvec{x}}') \cdot \hat{\bvec{x}}'$. However, we have $c(\hat{\bvec{x}}') \leq c(\bvec{x}')$ because $\hat{\bvec{x}}'$ is element-wise less than or equal to $\bvec{x}'$. Thus, we have $\nabla_{\max} \bar{u}(\hat{\bvec{x}}') \cdot \hat{\bvec{x}}' -c(\hat{\bvec{x}}')  \geq \nabla_{\max} \bar{u}(\bvec{x}') \cdot \bvec{x}' -  c(\bvec{x}') $ by monotonicity of $c(\bvec{x})$. Our previous derivation for the special case $\bvec{x}' = \alpha \bar{\bvec{x}}$  with $\alpha \in [0,1]$ implies $	\nabla_{\max} \bar{u}(\bar{\bvec{x}}) \cdot \bar{\bvec{x}} - c(\bar{\bvec{x}}) \geq \nabla_{\max} \bar{u}(\alpha \cdot \bvec{\bar{x}}) \cdot (\alpha \cdot \bvec{\bar{x}}) - c(\alpha \cdot \bvec{\bar{x}}), \forall \alpha \in [0,1]$. These together imply  $$ \nabla_{\max} \bar{u}(\bar{\bvec{x}}) \cdot \bar{\bvec{x}} - c(\bar{\bvec{x}}) \geq \nabla_{\max} \bar{u}(\hat{\bvec{x}}') \cdot \hat{\bvec{x}}' -c(\hat{\bvec{x}}')  \geq \nabla_{\max} \bar{u}(\bvec{x}') \cdot \bvec{x}' -  c(\bvec{x}').$$
 	\end{enumerate}
 	
 	
 	
 	As a result,  the constructed $\bar{u}(\bvec{x})$  is feasible to FOP \eqref{lp:optimaldeception} because $\nabla_{max} u(\bar{\bvec{x}}) \cdot \bar{\bvec{x}} - c(\bar{\bvec{x}}) \geq \nabla_{max} u(\bvec{x}') \cdot \bvec{x}' - c(\bvec{x}')$ for any $\bvec{x}' \in X$. 
 	
 	Next, we argue that $\bar{u}(\bvec{x})$ achieves the same buyer utility, and thus must also be optimal. This is simply because the feasibility of $\bar{u}(\bvec{x})$ implies that the optimal buyer bundle will still  be $\bar{\bvec{x}}$ and payment will still be    $\bar{p} =   \nabla_{\max} u^*(\bar{\bvec{x}}) \cdot \bar{\bvec{x}}$.  As a result, buyer achieves the same utility when using $\bar{u}(\bvec{x})$ and $u^*(\bvec{x})$, yielding the optimality of $\bar{u}(\bvec{x})$.
 	
 	So far we showed that there always exists an optimal $u(\bvec{x})$ of Form \eqref{eq:optimalformat_lem}. Therefore,   to solve FOP \eqref{lp:optimaldeception}, we can without loss of generality focus on functions of the Form \eqref{eq:optimalformat_lem}, which is parameterized $p$. By plugging in Form \eqref{eq:optimalformat_lem} into FOP \eqref{lp:optimaldeception}, we obtain the following LP with variable $p \in \RR$.  
 	
 	\begin{lp}
 		\maxi{ v(\bar{\bvec{x}}) - p  }
 		\st 
 		\qcon{ p - c(\bar{\bvec{x}}) \geq \alpha \cdot  p - c(\bvec{x}') }{ \bvec{x}' =\alpha \cdot  \bar{\bvec{x}}, \alpha \in [0,1] }
 		\qcon{p - c(\bar{\bvec{x}}) \geq 0 - c(\bvec{x}' )}{\bvec{x}'  \text{ element-wise greater than }\bvec{\bar{x}} }
 		\qcon{ p - c(\bar{\bvec{x}}) \geq \min_{\{i|x'_i \leq \bar{x}_i\}} \frac{x'_i}{\bar{x}_i} \cdot  p - c(\bvec{x}' )}{\text{other } \bvec{x}' \in X}
 		\con{ p \geq 0 }
 	\end{lp}
 	We now further simplify the above LP to become LP \eqref{lp:buyer-simple-lemma}. That is, we argue that only the first constraint is needed and thus the other constraints can be omitted. When the first constraint is instantiated with $\bvec{x}' = \bvec{0}$, it implies $p - c(\bar{\bvec{x}})  \geq 0$. This  immediately implies the \textit{second} and the \emph{last} constraint.  By the proof above, we know that for any $\bvec{x}' \neq \alpha \cdot \bvec{\bar{x}}$ and $\bvec{x}'$ is not element-wise greater than $\bar{x}$ either, there must exist $\hat{\bvec{x}}'=\alpha^* \cdot \bvec{\bar{x}} \, (\alpha^*\in[0,1])$ such that $\nabla_{\max} \bar{u}(\hat{\bvec{x}}') \cdot \hat{\bvec{x}}' -c(\hat{\bvec{x}}')  \geq \nabla_{\max} \bar{u}(\bvec{x}') \cdot \bvec{x}' -  c(\bvec{x}')$. Therefore, the \textit{third} constraint is guaranteed to be satisfied as long as the first constraint is satisfied. 
 	As a result, the above LP can be further simplified to LP \eqref{lp:buyer-simple-lemma}.  
 	
 	The constraint of LP \eqref{lp:buyer-simple-lemma} guarantee that the seller will maximize revenue at bundle $\bar{x}$.  Since at $x = \bar{x}$, any $i$ will minimize the term $\frac{x_i}{\bar{x_i}}$. Lemma \ref{lem:optimalpricing} then implies the optimal prices at $\bar{x}$ can be any convex combination of pricing vectors $(0,\cdots,0, \frac{\bar{p}}{\bar{x_i}}, 0, \cdots, 0), \forall i$ . The total payment will always be $\bar{p}$ under any of these optimal pricing schemes. This completes the proof. 
 \end{proof}
 
 Theorem \eqref{thm:characterization} then follows from  Lemma \ref{lem:characterization-fix-y}.  We first observe that constraint in linear program \eqref{lp:buyer-simple-lemma} can be re-written as $p \geq \frac{c(\bar{\bvec{x}}) - c(\alpha \bar{\bvec{x}}) }{1 - \alpha}$ for $\alpha \in [0,1)$ (the constraint is trivial for $\alpha = 1$). Since the objective of LP \eqref{lp:buyer-simple-lemma}  is equivalent to minimizing $p$, we thus have the optimal $p$ equals $\max_{\alpha \in [0,1) } \big[ \frac{c(\bar{\bvec{x}}) - c(\alpha \bar{\bvec{x}}) }{1 - \alpha} \big] $. 
 
 We have now characterized the optimal imitative value function for any fixed $\bar{\bvec{x}}$.  To compute the globally optimal imitative value function $u^*$, we  only need to pick the  $\bar{\bvec{x}}$ that maximizes the buyer's surplus.   By viewing $\bar{\bvec{x}}$ as a variable $\bvec{x}$, we obtain the desired form of $\bvec{x}^*$ as in Equation \eqref{eq:optimal-format}. Finally, the buyer surplus and seller revenue follow directly from the fact that purchase happens at bundle $\bvec{x}^*$ with payment $p^*$. These conclude the proof of Theorem \ref{thm:characterization}.

\section{Omitted Proofs in Section \ref{sec:instantiation}}\label{app_sec:proof-section-initantiation}

\subsection{Proof of Theorem \ref{thm:characterization1}}
\primethmcharacterizationone*
\begin{proof}
Suppose $c(\bvec{x})$ is convex and differentiable. Fix any  $\bvec{x}=\bar{\bvec{x}}$.  Consider the function $c(\alpha   \bar{\bvec{x}})$ with variable $\alpha \in [0,1)$. This is a one-dimensional convex non-decreasing function. Due to convexity,  the supremum of  $\frac{  c( \bvec{x}) -c(\alpha \bvec{x}) }{1 - \alpha} $  over $\alpha \in [0,1)$  equals precisely the derivative of  $c(\alpha \cdot \bar{\bvec{x}})$ at $\alpha = 1$, which is  $\bvec{x} \cdot \nabla  c(\bvec{x})$. To find the $\bar{\bvec{x}}$ that maximizes the buyer's revenue, the buyer will pick $\bvec{x}^* = \arg \max_{\bvec{x} \in X} ~[v(\bvec{x}) - \bvec{x} \cdot \nabla  c(\bvec{x})]$.  Given the above characterization, the theorem conclusion follows from Theorem \ref{thm:characterization}.    
\end{proof}

\subsection{Proof of Theorem \ref{thm:characterization2}}
\primethmcharacterizationtwo*
\begin{proof}
	Suppose $c(\bvec{x})$ is concave and differentiable. Fix any  $\bvec{x}=\bar{\bvec{x}}$.  Consider the function $c(\alpha   \bar{\bvec{x}})$ with variable $\alpha \in [0,1)$. This is a one-dimensional concave non-decreasing function. Due to concavity,  the supremum of  $\frac{  c( \bvec{x}) -c(\alpha \bvec{x}) }{1 - \alpha} $  over $\alpha \in [0,1)$  is achieved at $\alpha  = 0$. The supremum thus equals precisely $\frac{  c( \bvec{x}) -c(0 \cdot  \bvec{x}) }{1 -0} = c(\bvec{x})$.  To find the $\bar{\bvec{x}}$ that maximizes the buyer's revenue, the buyer will pick $\bvec{x}^* = \arg \max_{\bvec{x} \in X} ~[v(\bvec{x}) - c(\bvec{x})]$.  Given the above characterization, the theorem conclusion follows from Theorem \ref{thm:characterization}.  Specifically, the buyer payment $p^* = c(\bvec{x}^*)$,   leading to seller revenue $0$. 
\end{proof}

\subsection{Proof of Theorem \ref{prop:hardness}}
\primethmhardness*
\begin{proof}	
	As stated in the theorem, we  consider the case where the buyer's true value function is $v(\bvec{x})  =  \sum_{i=1}^d x_i$ and the seller's production cost function $c(\bvec{x})$ is a concave function that we will construct.  Theorem \ref{thm:characterization2} shows in this case, the buyer's optimal surplus is   
	$
	\max_{\bvec{x} \in X} [ \sum_{i=1}^d x_i -    c(\bvec{x}) ]. 
	$
	
	Next, we show that this optimization problem is NP-hard to be approximated within any meaningful ratio, as described by the theorem. 	Our reduction is from the independent set problem. For any  connected graph $G = (V, E)$ with $d$ nodes, let node set $ V = [d]  = \{ 1, 2, \cdots, d\}$.   A set $I$ is an independent set of $G$ if and only if any $i, j \in I$ are not adjacent in $G$. The problem of finding the largest independent set problem is NP-hard, and cannot be approximated within  ratio $1/d^{1-\epsilon}$ for any constant $\epsilon >0$.  
	
	Given any instance graph $G = (V, E)$ of the Independent set problem, we construct the following concave production cost function: $$ c(\bvec{x}) = \sum_{i=1}^d \min(\sum_{j=1}^d a_{ji}  x_j, x_i), \text{ where } a_{ji} = 1 \text{ if }(j,i) \in E \text{ and } a_{ji} = 0 \text{ otherwise.}$$  
	Moreover, the set of feasible bundles is $X = [0,1]^d$.   Note that $c(\bvec{x})$ is a concave function because the minimum of two linear functions is concave and the sum of concave functions remains concave. Moreover, $c(\bvec{x})$ is monotone non-decreasing and $c(\bvec{0}) = 0$. So $c(\bvec{x})$ is indeed a valid cost function for our setting.   Under this  construction, the maximum possible buyer surplus is the optimal objective of the following optimization problem (OP): 
	\begin{equation}\label{eq:hardness-obj}
		\max_{\bvec{x} \in [0,1]^d} \, \, U(\bvec{x}) = \sum_{i=1}^d x_i - c(\bvec{x}) = \sum_{i=1}^d \bigg[ x_i - \min(\sum_{j=1}^d x_ja_{ji},x_i) \bigg].    
	\end{equation} 
	
	We now show via a reduction from the largest independent set problem that it  is NP-hard to approximate OP \eqref{eq:hardness-obj} to  be within ratio $1/d^{1- \epsilon}$ for any $\epsilon >0$. Let $I \subseteq V$ be the maximum independent set. We claim that the optimal objective value of the above optimization problem equals precisely $|I|$, the size of the maximum independent set. For convenience, let term  $U_i(\bvec{x}) = x_i - \min(\sum_{j=1}^d x_ja_{ji},x_i) $ and thus $U(\bvec{x}) = \sum_{i=1}^d U_i(\bvec{x})$. Note that $U_i(\bvec{x}) \leq 1$ for any $i \in V$.  
	
	First, we show that the optimal objective value of OP \eqref{eq:hardness-obj}  is at least $|I|$. To see this, consider $\bar{\bvec{x}}$ such that $\bar{x}_i = 1$ if $i \in I$ and $\bar{x}_i = 0$ if $i \not \in I$. We argue that for any $i \in I$, $U_i(\bar{\bvec{x}}) = 1$.  This is because $\bar{x}_j a_{ji} = 0$ for any $j$ --- for any $\bar{x}_j = 1$, we must have $a_{ji} = 0$ as the two nodes $i,j$ are both in the independent set and thus cannot have an edge between them (i.e., $a_{ji} = 0$). Therefore, $\sum_{j=1}^d \bar{x}_ja_{ji} = 0$ and thus $\min(\sum_{j=1}^d \bar{x}_ja_{ji},\bar{x}_i) = 0$.  As a consequence, for any $i \in I$, $U_i(\bar{\bvec{x}}) = 1$ and thus the objective of \eqref{eq:hardness-obj} at $\bar{\bvec{x}}$ is at least $|I|$. 
	
	Next, we   show the reverse direction, i.e.,  $\max_{\bvec{x} \in [0,1]^d} \, \, U(\bvec{x})$ is at most $|I|$. Note that $U(\bvec{x})$ is a convex function. So it must achieve the maximum at some vertex $\bvec{x}^*$ of the feasible region, which is a binary vector. Let $S^* \subseteq [d]$ denote the set of the indexes of non-zero values in $\bvec{x}^*$. First of all, for any $i \not \in S^*$, $U_i(\bvec{x}^*) \leq  x_i^* = 0$. Second, for any $i \in S^*$, if there exists $j \in S^*$ such that  $(i,j) \in E$ is an edge, then $x^*_j a_{ji}=1$ and thus $\min(\sum_{j=1}^d x^*_j a_{ji}, x^*_i) = 1$. This implies $U_i(\bvec{x}^*) = 0$.  Similarly, $U_j(\bvec{x}^*) = 0$ as well. Finally, for any $i \in S^*$ without any neighbor included in $S^*$, it is easy to see that $U_i(\bvec{x}^*) = 1$. To sum up, only the node $i \in S^*$ that does not have any neighbor included in $S^*$ can have $U_i(\bvec{x}^*) = 1$ whereas any other node $i$ has $U_i(\bvec{x}^*) = 0$. Therefore,  $U(\bvec{x}^*)$ is at most the size of the number of independent nodes in $S^*$, which is at most the size of the maximum independent set for $G$, as desired.
	
	So far we have shown that the optimal objective value of OP \eqref{eq:hardness-obj} equals precisely the size of the maximum independent set of $G$. However, we are not done yet  to prove the inapproximability of maximizing $U(\bvec{x})$.  This is because $U(\bvec{x})$ takes fractional variables as input. The fact that it is hard to find an independent set to approximate the size of the maximum independent set does not imply the hardness of finding a fractional variable $\bvec{x} \in X$ to approximate $\max U(\bvec{x})$. 
	
	To prove the inapproximability of the continuous OP \eqref{eq:hardness-obj},  we show that any $\alpha$-approximation to OP \eqref{eq:hardness-obj} can be efficiently turned into an $\alpha$-approximation to the largest independent set problem, using ideas from de-randomization. Since it is NP-hard to approximate the largest independent set problem to be within $1/d^{1 - \epsilon}$ for any $\epsilon > 0$, this will conclude the proof of the proposition.
	
	Specifically, let $\bar{\bvec{x}} \in [0,1]^d$ be any $\alpha$-approximation to OP \eqref{eq:hardness-obj}. We construct a \emph{random} binary vector $
	\bar{X}$ as follows: $\Pr( \bar{X}_i = 1) = \bar{x}_i$ and $\Pr( \bar{X}_i = 0) = 1- \bar{x}_i$ for each $i$ independently. By convexity of $U(\bvec{x})$, we have $\Ex  [U(\bar{X})] \geq  U(\Ex   [\bar{X}]) = U(\bar{\bvec{x}})$. In other words, if we pick the random solution $\bar{X}$, the expected objective is at least $U(\bar{\bvec{x}})$. By a standard de-randomization procedure (up to an additive $\epsilon$ difference due to Monte-Carlo sampling),\footnote{Specifically,  $\Ex_{\bar{X}   }   [U(\bar{X})] = \bar{x}_i \Ex_{\bar{X}   }   [U(\bar{X}|\bar{X}_i = 1)] + (1-\bar{x}_i) \Ex_{\bar{X}   }   [U(\bar{X}|\bar{X}_i = 0)]$ for each $i$. To de-randomize, we  simply calculate $\Ex_{\bar{X}   }   [U(\bar{X}|\bar{X}_i = 1)] $ and $\Ex_{\bar{X}   }   [U(\bar{X}|\bar{X}_i = 0)]$ through Monte-Carlo sampling, and  then pick the larger one.}  we can efficiently find a binary vector $\bar{\bvec{x}}'$ whose value is also at least $U(\bar{\bvec{x}})$. By a similar argument above, we know that all the independent nodes in  $\bar{\bvec{x}}'$ form an independent set whose size is at least $U(\bar{\bvec{x}})$, as desired.     
\end{proof}

\section{Omitted Proofs in Section \ref{sec:nonlinear-pricing}}\label{app_sec:proof-concave}
\subsection{Proof of Lemma \ref{lemma:non_linear_pricing}}\label{app_sec:proof-non-linear-pricing}
\primelemmanonlinearpricing*
\begin{proof}[Proof of Lemma \ref{lemma:non_linear_pricing}] 
	We first prove the a function of the following format, parameterized by variable $\bar{p}$, is optimal to FOP \eqref{lp:optimal_deception_concave_pricing}: 
	\begin{equation} \label{eq:optimalformat_concave_pricing}
		\bar{u}(\bvec{x}) =\bar{p}  \cdot \min \{ \frac{x_1}{\bar{x}_1}, \cdots \frac{x_d}{\bar{x}_d}, 1\} 
	\end{equation}
	To prove this, let $u^*$ be any optimal solution  to \eqref{lp:optimal_deception_concave_pricing}. We now construct another concave value function $\bar{u}(\bvec{x})$ as follows:
	\begin{equation} \label{eq:optimalformat_specific}
		\bar{u}(\bvec{x}) =u^* (\bar{\bvec{x}}) \cdot \min \{ \frac{x_1}{\bar{x}_1}, \cdots \frac{x_d}{\bar{x}_d}, 1\} 
	\end{equation} 
	Next, we first argue that the constructed $\bar{u}$
	is still feasible to FOP \eqref{lp:optimal_deception_concave_pricing}. First,   for any $\bvec{x}' =\alpha \bvec{\bar{x}}$ where $\alpha \in [0,1]$, we have  
	\begin{align}\label{eq:proof-concave-pricing}
		\bar{u}(\bar{\bvec{x}}) - c(\bar{\bvec{x}}) &= u^*(\bar{\bvec{x}}) - c(\bar{\bvec{x}}) & \mbox{(by definition of $\bar{u}$)}\\ 
		& \geq u^*(\bvec{x}')  - c(\bvec{x}') & \mbox{(by feasibility of $u^*$ ) }\\ 
		&\geq \alpha \cdot u^*(\bar{\bvec{x}}) - c(\bvec{x}')& \mbox{(By concavity of $u^*$, and $\bvec{x}'=\alpha \bvec{\bar{x}}$)}\\
		& \label{eq:proof_concave_pricing_last_step}= \bar{u}(\bvec{x}')- c(\bvec{x}'). & \mbox{(by definition of $ \bar{u}$) } 	\end{align}
	Note the last inequality is by concavity of $u^*$, we have $ (1-\alpha)\cdot u^*(\bvec{0}) + \alpha\cdot u^*(\bvec{\bar{x}}) \leq u^*(\alpha \bvec{\bar{x}})$ where we have $u^*(\bvec{0}) = 0$. Thus, we have $\alpha \cdot u^* (\bar{\bvec{x}})\leq u^*(\alpha \bvec{\bar{x}}) = u^*(\bvec{x}')$.
	
	Then we consider the case $\bvec{x}' \neq \alpha \bvec{\bar{x}}$ .  There will be two possible situations if $\bvec{x}' \neq \alpha \bar{\bvec{x}}$:
	\begin{enumerate}
		\item If $\bvec{x}' $ is element-wise greater than $\bvec{\bar{x}}$, then $ \bar{u}(\bvec{x}')  =u^*(\bar{\bvec{x}}) \cdot \min \{ \frac{x'_1}{\bar{x}_1}, \cdots \frac{x'_d}{\bar{x}_d}, 1\} = \bar{u}(\bvec{\bar{x}})$. Thus, we have $\bar{u}(\bvec{\bar{x}})- c(\bar{\bvec{x}}) \geq  \bar{u}(\bvec{x}') - c(\bvec{x}')$ by the monotonicity of $c(\bvec{x})$.
		\item If $\bvec{x}' $ is \textit{not} element-wise greater than $\bvec{\bar{x}}$, let $i^* = \argmin_{\{i|x'_i \leq \bar{x}_i\}} \frac{x'_i}{\bar{x}_i} $ and $\alpha^* = \min_{\{i|x'_i \leq \bar{x}_i\}} \frac{x'_i}{\bar{x}_i} $. Then we have  $\bar{u}(\bvec{x}') = u^*(\bar{\bvec{x}}) \cdot \min \{ \frac{x'_1}{\bar{x}_1}, \cdots \frac{x'_d}{\bar{x}_d}, 1\}= \alpha^* \, u^* (\bar{\bvec{x}}) $. In addition, denote $\hat{\bvec{x}}' = \alpha^*  \bar{\bvec{x}}$. Note that for $\hat{\bvec{x}}' $, we also have $\bar{u} (\hat{\bvec{x}}' ) = u^* (\bar{\bvec{x}}) \cdot \min \{ \frac{\hat{x}'_1}{\bar{x}_1}, \cdots \frac{\hat{x}'_d}{\bar{x}_d}, 1\}= \alpha^* \, u^* (\bar{\bvec{x}}) = \bar{u}(\bvec{x}') $. However, we have $c(\hat{\bvec{x}}') \leq c(\bvec{x}')$ because $\hat{\bvec{x}}'$ is element-wise less than or equal to $\bvec{x}'$. Thus, we have $\bar{u}(\hat{\bvec{x}}')  -c(\hat{\bvec{x}}') \geq \bar{u}(\bvec{x}')-  c(\bvec{x}') $ by monotonicity of $c(\bvec{x})$. By equations \eqref{eq:proof-concave-pricing}-\eqref{eq:proof_concave_pricing_last_step}, we have $\bar{u}(\bar{\bvec{x}}) - c(\bar{\bvec{x}})  \geq \bar{u}(\alpha \bvec{\bar{x}})  -c(\alpha \bvec{\bar{x}}), \forall \alpha \in [0,1]$. On the other hand, for any $\bvec{x}' \neq \alpha \bvec{\bar{x}} $ which is not element-wise greater than $\bvec{\bar{x}}$, there must exist a $\hat{\bvec{x}}' = \alpha^* \bar{\bvec{x}}$ where $\alpha^*= \min_{\{i|x'_i \leq \bar{x}_i\}} \frac{x'_i}{\bar{x}_i} \in [0,1]$ such that $\bar{u}(\bvec{x}')-  c(\bvec{x}') \leq \bar{u}(\hat{\bvec{x}}')  -c(\hat{\bvec{x}}')  \leq \bar{u}(\bar{\bvec{x}}) - c(\bar{\bvec{x}}) $.
	\end{enumerate}

	As a result,  the constructed $\bar{u}(\bvec{x})$  is feasible to FOP \eqref{lp:optimal_deception_concave_pricing} because $\bar{u}(\bar{\bvec{x}}) - c(\bar{\bvec{x}})  \geq \bar{u}(\bvec{x}')-  c(\bvec{x}') $ for any $\bvec{x}'\in X$ .
	
	Next, we argue that $\bar{u}$ achieves the same buyer utility, and thus must also be optimal. This is because: (1) the feasiblity of $\bar{u}$ implies that the optimal price will be $\bar{u}(\bar{\bvec{x}}) = u^*(\bar{\bvec{x}})$; (2) the optimal buyer amount will then be $\bar{\bvec{x}}$ by breaking ties rule. As a result, buyer achieves the same utility when using $\bar{u}$ and $u^*$, yielding the optimality of $\bar{u}$.
	
	So far we showed that there always exists an optimal $\bar{u}$ of Form \eqref{eq:optimalformat_concave_pricing}, which is parameterized $\bar{p} \in \mathbb{R}_{\geq 0}$. We then return to the situation of linear pricing since a linear pricing scheme with unit price $ (\lambda_1 \cdot \frac{\bar{p}}{\bar{x}_1}, \lambda_2 \cdot \frac{\bar{p}}{\bar{x}_2}, \cdots, \lambda_d \cdot \frac{\bar{p}}{\bar{x}_d})$ where $\sum_i \lambda_i = 1$ is  optimal for such a  $\bar{u}$ among all concave pricing schemes. The characterization then follows from Theorem \ref{thm:characterization}. 
\end{proof}

\subsection{Proof of Proposition \ref{thm:concave_sub_class}} \label{app_sec:proof_concave_sub_class}
\primepropsubclass*
\begin{proof}
	We first analyze the situation of linear pricing. By the characterization in Theorem \ref{thm:characterization1}, we know that the optimal purchase bundle satisfies $x^* = \argmax_{\mathbf{x}\in X}[v(\mathbf{x}) - \mathbf{x} \cdot \nabla c(\mathbf{x})]$. Given $v(x)$ and $c(x)$ in Example \ref{ex:hurt}, this solves for $x^* = 0.81$ in the linear pricing setting. Furthermore, this gives $p^* = x^* \cdot \nabla c(x^*) = 1.3122$ and optimal imitative value function $u^* = 1.3122 \cdot \min\{x/0.81, 1\}$. The seller revenue in this case is $p^* - c(x^*) = 0.6561$.
	
	Next, we show that the seller's revenue will strictly decrease at equilibrium when using pricing class $\P = \P_L \cup \{ \tilde{p} \}$ that augments linear pricing class $\P_L$  with the following additional choice of a concave pricing function
	\[
	\tilde{p} (x) = \min(\sqrt{x}, \frac{5}{9}x + \frac{9}{20} - \epsilon)
	\]
	where  $\epsilon = 0.05$  is a   small constant. 
	
	Consider the imitative value function $u^*(x) = \sqrt{x}$ and a particular linear pricing response $p > 0$. In this case, the buyer will purchase an amount $x' = \argmax_{x\in X}[u^*(x) - p\cdot x]$, implying $\frac{1}{2\sqrt{x}} = \frac{du^*(x')}{dx} = p$. Solving for $x'$ gives $x' = \frac{1}{4p^2}$. Now, we solve for the linear pricing response $p_L$ that maximizes the seller revenue.
	
	\begin{align*}
		p_L &= \argmax_{p \in L}[p \cdot x' - c(x')] \\
		&= \argmax_{p \in L}[p \cdot \frac{1}{4p^2} - \frac{1}{(4p^2)^2}] \\
		&= \argmax_{p \in L}[\frac{1}{4p} - \frac{1}{16p^4}] \\
		&= 1
	\end{align*}
	
	Thus, the optimal linear pricing response to $u^*(x) = \sqrt{x}$ is $p_L = 1$. The buyer will purchase $x' = 0.25$, giving the seller a revenue of $1 \cdot 0.25 - 0.25^2 = 0.1875$. Finally, consider the pricing function $\tilde{p}(x)$ for the buyer's imitative value function $u^*(x) = \sqrt{x}$. We observe that $\frac{d}{dx}(\frac{5}{9}x + \frac{9}{20} - \epsilon) = \frac{5}{9} = \frac{1}{2\sqrt{0.81}} = \frac{d}{dx}(u^*(0.81))$, meaning $\argmax_{x \in X}[u^*(x) - p^*(x)] = 0.81$. Thus, the buyer can purchase $x = 0.81$ and the seller will get revenue $0.2439 - \epsilon = 0.1939$. This is strictly larger than the seller's revenue $0.1875$ from the optimal linear pricing scheme. Thus the seller's optimal pricing scheme from $\P$ is  $\tilde{p}(x)$. 
	
	Finally, note that $\argmax_{x \in X}[v(x) - \tilde{p}(x)] = 0.81$, meaning the optimal bundle for the buyer to purchase when the seller responds with $\tilde{p}$ is $x = 0.81$. In this case the buyer surplus is $v(0.81) - \tilde{p}(0.81) = 8.1 - 0.9 + \epsilon > 8.1 - 1.3122 = v(0.81) - p_L \cdot 0.81$, meaning the optimal true buyer surplus given pricing function $\tilde{p}(x)$ is greater than the optimal true buyer surplus given any linear pricing response. Thus, $u^*(x)$ is an optimal imitative function under pricing class $\P$ and the seller revenue of $0.2439 - \epsilon = 0.1939$ is strictly lower than the seller revenue of $0.6561$ under linear pricing class $\P_L$.
\end{proof}

\end{document}